\newtheorem{theorem}{Theorem}
\newtheorem{lemma}{Lemma}
\title{Fast Binary Embedding via Circulant Downsampled Matrix -- A Data-Independent Approach}
\name{}
\name{Sung-Hsien Hsieh$^{*,**}$, Chun-Shien Lu$^{*}$, and Soo-Chang Pei$^{**}$}
\address{$^{*}$Institute of Information Science, Academia Sinica, Taipei, Taiwan\\
$^{**}$Graduate Inst. Comm. Eng., National Taiwan University, Taipei, Taiwan}
\begin{document}

\maketitle

\begin{abstract}
Binary embedding of high-dimensional data aims to produce low-dimensional binary codes while preserving discriminative power.
State-of-the-art methods often suffer from high computation and storage costs.
We present a simple and fast embedding scheme by first downsampling $N$-dimensional data into $M$-dimensional data and then multiplying the data with an $M \times M$ circulant matrix.
Our method requires $O(N+M\log M)$ computation and $O(N)$ storage costs.
%, where conventional methods need $O(NM)$ and $O(NM)$, respectively.
We prove if data have sparsity, our scheme can achieve similarity-preserving well.
Experiments further demonstrate that though our method is cost-effective and fast, it still achieves comparable performance in image applications.
% it is still comparable to data-independent binary embedding approaches in terms of performance in image applications.
\end{abstract}

%\noindent
\begin{keywords}
Circulant matrix, Dimensionality reduction, Embedding, Random projection
\end{keywords}

\section{Introduction}\label{sec:introduction}
\subsection{Background and Related Work}
Embedding of high-dimensional data into low-dimensional space is an important task in diverse fields due to the concern of computation and storage costs.
In particular, embedding input data into binary space while preserving similarity is becoming popular
%, measured as the angle between two vectors in the input data space, is becoming popular
% for image retrieval, approximate nearest neighbor (ANN) search, and descriptor matching
because binary codes only require calculating Hamming distance implemented by adds.
%However, it has been noticed that for signals with dimensionality $N$, the required number of bits $M$ for saving binary codes after embedding is usually $O(N)$.
%A goal of binary embedding is to retain useful features ({\em i.e.} similarity) as Hamming distance in the binary space.

Most existing techniques can be classified into two cases: data-independent and data-dependent.
Data-independent techniques are popular due to their low-resource requirement and simplicity but often fail to give the best performance. On the contrary, data-dependent techniques often has better performance. But, along with the increase of size of data \cite{HYZ2013, JCBL2015}, they are prohibited from being applied to learning because of high computation and storage costs.

In data-independent techniques, the popular and pioneered techniques are Locality Sensitive Hashing (LSH) \cite{Charikar2002} and its extension Shift-Invariant Locality Sensitive Hashing (SKLSH) \cite{Raginsky2009} wherein embedding is based on random projection to achieve similarity-preserving.
In \cite{KL2009}, dimensionality reduction inherent in compressive sensing is exploited via random projection for image hash design.
Gong {\em et al.}\cite{Gong2013} proposed a bilinear projection to further reduce computation and storage overheads during embedding. Chang {\em et al.} \cite{ChangeFu2014}  proposed using a circulant matrix for projecting data because projection can be speeded up by Fast Fourier Transform (FFT).
A learning mechanism is also considered in \cite{Gong2013}\cite{ChangeFu2014}.

As for data-dependent techniques, different optimization criteria are used in the learning phase.
For example, Iterative Quantization (ITQ) \cite{Gong2013-ITQ} aims to minimize quantization error after PCA. \cite{Xia2015} proposed a sparsity regularizer in learning to reduce computation cost.
Recently, deep neural network (DNN) \cite{Lai2015} is used to jointly learn features and binary codes simultaneously.
These methods learn compact codes especially for low-dimensional embedding. But, most of them require $O(N^2)$ computation and storage costs that may not be practical.
Online learning is another issue along with increase of data \cite{HYZ2013}\cite{JCBL2015}.

\subsection{Contributions of This Paper}
In this paper, we propose a data-independent approach, including two steps: downsampling $N$-dimensional data into $M$-dimensional data first and then multiplying the data with an $M \times M$ circulant matrix.
The proposed method, achieving $O(N+M\log M)$ in computation cost and $O(N)$ in storage cost, obviously outperforms state-of-the-art methods.
%Compared with \cite{Charikar2002}\cite{Gong2013}, the proposed scheme uses circulant matrix to reduce computation.
Although our method and \cite{ChangeFu2014} are conceptually similar by introducing a circulant matrix for binary embedding, the major differences include:
(i) We use downsampling matrix to compress the signal first, leading to the fact that the size of our circulant matrix can depend on $M$ only instead of $N$.
In \cite{ChangeFu2014}, whatever $M$ is, it  requires the same computation cost $O(N \log N)$ because of using FFT for speeding computation.
Thus, when $M \ll N$, \cite{Charikar2002}\cite{Gong2013} are even faster than \cite{ChangeFu2014}.
(ii) We theoretically prove that even though downsampling is used, by combining downsampling with randomization,  similarity-preserving is still satisfied well.

In addition to the fact that the computation and storage costs of our method are smaller than those of previous methods, experimental results reveal that their performances in image applications are comparable.

\section{Notations}\label{sec:notations}
We display a matrix or a vector as bold. Let $\bm{V}$ be a matrix, where $\bm{v}_{i}$ is the $i^{th}$ column of $V$ and $\bm{v}^{j}$ is the $j^{th}$ row of $V$. $(\bm{V})_{i,j}$ is the $\left( i,j \right)^{th}$ entry of $\bm{V}$.
Let $\bm{u} \in \mathbb{R}^{N}$ be a vector and let $circ(\bm{u})$ be a circulant matrix generated based on the seed vector $\bm{u}$.
For example, for $\bm{U}=circ(\bm{u})$, the first row is $[(\bm{u})_{0},(\bm{u})_{1},...,(\bm{u})_{N-1} ]$, the second row is $[(\bm{u})_{N-1},(\bm{u})_{0},...,(\bm{u})_{N-2} ]$, and the last row is $[(\bm{u})_{1},(\bm{u})_{2},...,(\bm{u})_{0} ]$.

\section{Proposed Method}\label{sec:proposed method}
We first describe how to design a data-independent projection matrix to achieve both the lowest computation and storage costs in the literature.
Then, we prove that the proposed method still satisfies similarity-preserving property.
In this paper, following \cite{Charikar2002}\cite{ChangeFu2014}, similarity is measured as the
angle between two vectors in the input data space.

\subsection{Construction of Projection Matrix}\label{sec:construction}
The core idea is to design a projection matrix composed of a downsampling matrix and a circulant matrix achieving: (i) $O(N + M \log M)$ operations for fast embedding process. (ii) $O(N)$ bits for saving the projection matrix. (iii) Angle-preserving after embedding.

Binary embedding or $1$-bit compressive sensing \cite{BB2008} is defined as:
\begin{equation}
\label{eq: cbe}
\displaystyle \bm{h}=sign(\bm{A}\bm{x}).
\end{equation}
where $\bm{x} \in \mathbb{R}^{N}$ is an input signal, $\bm{h}\in \mathbb{R}^{M}$ is the corresponding binary code, $sign(\cdot)$ is a sign function, and $\bm{A}\in \mathbb{R}^{M \times N} $ is a projection matrix defined as:
%where $\bm{x} \in \mathbb{R}^{N}$ is an input data and $\bm{A}$ is a projection matrix defined as:
\begin{equation}
\label{eq: A}
\bm{A}=\bm{D} \Phi \bm{R}.
\end{equation}

Specifically, $\bm{R}$ is either a uniform random permutation matrix (global randomizer) or a diagonal random matrix (local randomizer) whose diagonal entries $(\bm{R})_{i,i}$ are i.i.d Bernoulli random variables with equal probability.
In our paper, $\bm{R}$ implements both global randomizer and local randomizer simultaneously\footnote{Specifically, let $\bm{R}_{1}$ be a global randomizer and let $\bm{R}_{2}$ be a local randomizer. Then, $\bm{R}=\bm{R}_{1}\bm{R}_{2}$.}.
$\Phi \in \mathbb{R}^{M \times N}$ is a downsampling matrix with $(\bm{\Phi})_{i,j}=1$ if $(-i+j) \text{ mod } M =0 $ for $0\leq i,j\leq N-1$.
$\bm{D} = circ(\bm{d}^{0}) \in \mathbb{R}^{M \times M}$ is a circulant matrix with seed vector $\bm{d}^{0}$, where $\bm{d}^{j}$ is the $j^{th}$ row of $\bm{D}$, to achieve:
1) faster computation than traditional random matrix; 2) fairly spreading the information into each bit.

Based on Eq. (\ref{eq: A}), the computation cost includes
(i) $\bm{D}$ is implemented by FFT with $O( M \log M)$.
(ii) $\bm{\Phi}$, in fact, acts to downsample $\bm{R}\bm{x}$ and cost $O(N)$ adds and zero multiplications. (iii) Each column in $\bm{R}$ only has a non-zero entry with either $-1$ or $+1$ and $\bm{R}$ costs $O(N)$ adds.
In sum, the computation cost is $O(N+M \log M)$.

Furthermore, in terms of storage cost, $\bm{D}$ is equivalent to $circ(\bm{d_0})$ and saving $\bm{d_0}$ costs $O(M)$.
$\bm{\Phi}$ is not necessary to be saved since $\bm{\Phi}$, in fact, is finished by:
\begin{equation}
\label{eq: Phi x operator_ver}
\displaystyle (\bm{\Phi}\bm{Rx})_{k}= \sum_{i=0}^{\frac{N}{M}-1} (\bm{Rx})_{k+iM}.
\end{equation}
$\bm{R}$ only has $N$ non-zero entries and costs $O(N)$.
Thus, the total storage cost is $O(N)$.

Table \ref{table: computation cost and storage cost} depicts the comparison between our scheme and representative fast embedding methods.
Specifically, $\bm{A}$'s in \cite{Charikar2002} and \cite{ChangeFu2014} are designed as a Gaussian random matrix and circulant matrix, respectively.
\cite{Gong2013} reshapes $\bm{x}$ into two-dimensional data, which are projected by two separable Gaussian random matrices with smaller size.

Our approach exhibits the best desired requirement in terms of computation and storage costs.
In addition, when one only focuses on the number of multiplications (adds can be handled more efficiently than multiplications) \cite{Hassanieh2012acm}, our scheme only requires $O(M \log M)$ computation cost.

\begin{table}[!htbp]
\caption{Comparison of computation and storage costs for data-independent binary embedding methods.}
\begin{tabular}{|c|c|c|}
\hline
Methods & Computation & Storage  \\
\hline\hline
Full projection \cite{Charikar2002} & $O(MN)$ & $O( MN )$  \\
Bilinear proj. \cite{Gong2013}& $O(N^{1.5})$ & $O( N )$  \\
Circulant proj. \cite{ChangeFu2014} & $O(N \log N)$ & $O( N )$  \\
Our scheme & $O(N + M \log M)$ & $O( N )$  \\
\hline
\end{tabular}
\label{table: computation cost and storage cost}
\end{table}

%To benefit from aforementioned advantages, the matrix , in fact, is not necessary to be designed as circulant matrix. For example, $(\bm{A}^{j})_{i}=1$ for $i  =(j-1)\frac{N}{M}+1$ to $j\frac{N}{M}$. Otherwise, $(\bm{A}^{j})_{i}=0$. However, D. Valsesia and E. Magli \cite{Valsesia2012} show that if $\bm{\Phi}$  is circulant, it is potential to do post-processing after embedding based on Theorem \ref{thm: circulant matrix exchange}.

%\begin{theorem} \label{thm: circulant matrix exchange} (slightly revised from \cite{Valsesia2012}) \\
%Let $\bm{T}=circ([\bm{t}\ 0_{1 \times (N-K)}]) \in \mathbb{R}^{N \times N}$ be a circulant matrix, where $\bm{t}$ is a $1 \times K$ ($K \leq M$) non-zero vector.
%Let $\hat{\bm{T}} \in \mathbb{R}^{M \times M} = circ([\bm{t}\ 0_{1 \times (M-K)}])$, let $\Phi$ be a $M \times N$ partial circulant matrix, and let $\bm{y}=\Phi \bm{x}$ and $\Phi \bm{T x}$ denote the measurements of $\bm{x}$ and measurements of filtered signal ($\bm{Tx}$), respectively.
%Then,
%$$ \left( \hat{\bm{T}}\bm{y} \right)_{i} = \left( \Phi \bm{T x} \right)_{i}\ \ \text{if and only if} \ \ i \in [0,\ M-K ]. $$
%\end{theorem}
%$\bm{t}$ serves as a filter. Thus, Theorem \ref{thm: circulant matrix exchange} illustrates that instead of filtering $\bm{x}$ in high dimension, filtering $\bm{y}$ in low dimension achieves the same result, while saving the computation cost. In this case, extra $O(M \log M)$ operations are required when $\hat{\bm{T}}\bm{y}$ is computed by Fast Fourier Transform (FFT).

\subsection{Angle-Preserving Property Based on Sparsity}\label{sec:CBE}

Like \cite{Charikar2002}\cite{ChangeFu2014}\cite{Kim2015},  we analyze the property of similarity (angle)-preserving for the proposed scheme in this section.
Angle-preserving is useful because angle includes the information about similarity between data, which is an important physical property in many applications, including image retrieval and  nearest neighbor search.
%In fact, data-independent techniques such as \cite{Charikar2002} \cite{ChangeFu2014} often focuses on this property.

Suppose $\mathcal{H}_{M}\left(\bm{x}_{1},\bm{x}_{2}\right)$ is the normalized Hamming distance between $\bm{x}_{1}$, $\bm{x}_{2}$:
\begin{equation}
\label{eq: hamming distance}
\displaystyle \mathcal{H}_{M}\left(\bm{x}_{1},\bm{x}_{2}\right)=\frac{1}{2M}\sum_{i=0}^{M}\left| sign(\bm{a}^{i}\bm{x}_{1}) - sign(\bm{a}^{i}\bm{x}_{2}) \right|,
\end{equation}
where $\bm{a}^{j}$ is the $j^{th}$ row of $\bm{A}$.
It is expected that $\mathcal{H}_{M}\left(\bm{x}_{1},\bm{x}_{2}\right)$ is related to the angle  $\theta$ between $\bm{x}_{1}$ and $\bm{x}_{2}$.
The ideal case of angle-preserving property satisfies $E\left\{ \mathcal{H}_{M}\left(\bm{x}_{1},\bm{x}_{2}\right) \right\} = c\theta$, where $c$ is a constant, and $Var\left\{ \mathcal{H}_{M}\left(\bm{x}_{1},\bm{x}_{2}\right) \right\} = 0$.

If $\bm{A}$ is drawn from i.i.d distribution, which collides with the proposed method, M. S. Charikar \cite{Charikar2002} has shown $E\left\{ \mathcal{H}_{M}\left(\bm{x}_{1},\bm{x}_{2}\right) \right\} = \frac{\theta}{\pi}$ and $Var\left\{ \mathcal{H}_{M}\left(\bm{x}_{1},\bm{x}_{2}\right) \right\} = \frac{\theta(\pi-\theta)}{M\pi^2}$.
Chang {\em et al.} \cite{ChangeFu2014} only show by experiments if $\bm{A}$ is a circulant matrix, whose first row is a Gaussian random vector, the \textit{sample} mean and \textit{sample} variance of $\mathcal{H}_{M}\left(\bm{x}_{1},\bm{x}_{2}\right)$ corresponding to $\bm{A}$  approximates the results of M. S. Charikar \cite{Charikar2002}.

Our proof of angle-preserving property includes two steps:
1) Let $\hat{\bm{x}}=\bm{\Phi}\bm{R}\bm{x}$. We prove $\bm{D}$ can preserve the angle $\hat{\theta}$ between $\hat{\bm{x}}_{1}$ and $\hat{\bm{x}}_{2}$.
2) Then, we show $\hat{\theta} \sim \theta$ holds, which implies our scheme preserves $\theta$.

For the first step,  \cite{ChangeFu2014} has validated if $\bm{d}^{0}$ is a Gaussian random vector, then $\bm{D}$ preserves  $\hat{\theta}$ between $\hat{\bm{x}}_{1}$ and $\hat{\bm{x}}_{2}$ after embedding.
For the second step, Chang and Wu \cite{Chang2013} show that if a matrix satisfies $\delta_{K}$-RIP, it also preserves angle with the distortion being proportional to $\delta_{K}$ after embedding.
\begin{theorem} \label{thm: RIP}($\delta_{K}$-RIP  \cite{Baraniuk07} )
Let $\bm{A} \in \mathbb{R}^{M \times N}$ be a random matrix drawn according to any distribution that satisfies the concentration inequality.
Then, for any $K$-sparse signal $\bm{x}$ and any $0 < \delta_{K} < 1 $, we have
\begin{equation}
\label{eq: RIP for random matrix}
\displaystyle (1-\delta_{K})\| \bm{x} \|_{2}^{2} \leq \| \bm{A}\bm{x} \|_{2}^{2} \leq (1+\delta_{K})\| \bm{x} \|_{2}^{2},
\end{equation}
with the probability
$$ \geq 1- 2 \left(\frac{12}{\delta_{K}}\right)^{K}e^{-\left(\frac{\delta_{K}^{2}}{16}-\frac{\delta_{K}^{3}}{48} \right)M }.$$
\end{theorem}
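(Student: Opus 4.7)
The plan is to follow the classical covering / $\epsilon$-net argument of Baraniuk, Davenport, DeVore and Wakin. The hypothesis that $\bm{A}$ is drawn from a distribution satisfying the concentration inequality means: for every fixed $\bm{x}\in\mathbb{R}^{N}$ and every $t\in(0,1)$,
\begin{equation*}
\Pr\!\Bigl[\,\bigl|\|\bm{A}\bm{x}\|_{2}^{2}-\|\bm{x}\|_{2}^{2}\bigr|\geq t\|\bm{x}\|_{2}^{2}\,\Bigr]\leq 2\exp\!\bigl(-c_{0}(t)M\bigr),
\end{equation*}
where $c_{0}(t)$ is a quadratic-type function of $t$ that, when evaluated at $t=\delta_{K}/2$, will give exactly the exponent $\delta_{K}^{2}/16-\delta_{K}^{3}/48$ in the theorem. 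So the goal is to upgrade this pointwise bound to a bound that is uniform over the infinite set of unit-norm $K$-sparse signals sharing a common support.

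First I would fix an index set $T\subset\{0,\ldots,N-1\}$ with $|T|=K$ and let $X_{T}\subset\mathbb{R}^{N}$ denote the $K$-dimensional coordinate subspace. By homogeneity it suffices to prove the two-sided inequality for $\bm{x}$ on the unit sphere $S_{T}$ of $X_{T}$. A standard volume argument on a $K$-dimensional ball shows there exists a finite set $Q\subset S_{T}$ (a $(\delta_{K}/4)$-net) with $|Q|\leq(12/\delta_{K})^{K}$ such that every $\bm{x}\in S_{T}$ satisfies $\|\bm{x}-\bm{q}\|_{2}\leq\delta_{K}/4$ for some $\bm{q}\in Q$.

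Second, I would apply the concentration inequality with parameter $t=\delta_{K}/2$ to each element of $Q$ and take a union bound. This produces an event of probability at least $1-2(12/\delta_{K})^{K}\exp\!\bigl(-c_{0}(\delta_{K}/2)M\bigr)$ on which every $\bm{q}\in Q$ enjoys $(1-\delta_{K}/2)\leq\|\bm{A}\bm{q}\|_{2}^{2}\leq(1+\delta_{K}/2)$. Taking square roots gives the corresponding bound in the $\ell_{2}$-norm up to negligible higher-order corrections in $\delta_{K}$; these higher-order terms are the source of the $-\delta_{K}^{3}/48$ correction in the final exponent.

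Third, I would extend the control from $Q$ to all of $S_{T}$. Let $\alpha$ be the smallest constant such that $\|\bm{A}\bm{x}\|_{2}\leq\sqrt{1+\alpha}\,\|\bm{x}\|_{2}$ for all $\bm{x}\in X_{T}$. For arbitrary $\bm{x}\in S_{T}$, pick $\bm{q}\in Q$ with $\|\bm{x}-\bm{q}\|_{2}\leq\delta_{K}/4$; the triangle inequality gives $\|\bm{A}\bm{x}\|_{2}\leq\|\bm{A}\bm{q}\|_{2}+\|\bm{A}(\bm{x}-\bm{q})\|_{2}\leq\sqrt{1+\delta_{K}/2}+\sqrt{1+\alpha}\cdot(\delta_{K}/4)$. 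Taking the supremum over $\bm{x}$ yields an inequality in $\alpha$ whose solution forces $\alpha\leq\delta_{K}$; a symmetric argument handles the lower inequality. Finally, the probability estimate from step two already has the required form, so no further union bound is needed over $T$ (the theorem is stated for a single $K$-sparse signal, equivalently a single support).

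The main obstacle is calibrating the constants: the net size $(12/\delta_{K})^{K}$ forces the use of a resolution $\delta_{K}/4$, while the pointwise concentration must be invoked at a smaller slack $\delta_{K}/2$ so that the net-to-sphere passage still lands at $\delta_{K}$. Matching these choices against the quadratic/cubic dependence of the concentration exponent on $t$ is what produces the exact $\delta_{K}^{2}/16-\delta_{K}^{3}/48$ term; getting this bookkeeping right, rather than any deep idea, is where the technical effort lies.
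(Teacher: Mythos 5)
The paper contains no proof of this statement: it is quoted (essentially as Lemma~5.1) from the cited reference \cite{Baraniuk07}, so there is no in-paper argument to compare against, and your task reduces to reconstructing the standard proof from that source, which you do correctly. Your bookkeeping checks out against the stated constants: the concentration exponent $c_{0}(t)=t^{2}/4-t^{3}/6$ evaluated at $t=\delta_{K}/2$ gives exactly $\delta_{K}^{2}/16-\delta_{K}^{3}/48$, a $(\delta_{K}/4)$-net of the unit sphere of a $K$-dimensional coordinate subspace has at most $(12/\delta_{K})^{K}$ points, and the fixed-point argument in $\alpha$ closes the net-to-sphere gap; your remark that no union bound over supports is required is consistent with the probability bound as stated (it is the single-support version; a uniform-over-all-supports statement would carry an extra $\binom{N}{K}$ factor). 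The one small slip is attributing the $-\delta_{K}^{3}/48$ term to the square-root step in passing from $\|\bm{A}\bm{q}\|_{2}^{2}$ to $\|\bm{A}\bm{q}\|_{2}$; that cubic correction comes entirely from the concentration exponent $c_{0}(\delta_{K}/2)$, as you yourself state earlier, and the square-root passage is handled exactly (not approximately) by the triangle-inequality argument.
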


We call a matrix satisfying $\delta_{K}$-RIP when Eq. (\ref{eq: RIP for random matrix}) holds. In other words, if $\bm{\Phi}\bm{R}$ satisfies $\delta_{K}$-RIP, $\bm{\Phi}\bm{R}$ preserves the angle.
To date, finding a deterministic matrix satisfying RIP within polynomial time, however, is still an open problem \cite{Foucart2013}.
Unfortunately, the proposed projection matrix is deterministic to violate Theorem \ref{thm: RIP}.

To overcome this problem, we derive another theoretical bound about $\delta_{K}$ along with the lower bound of the probability. We start from the following Lemma.
 \begin{lemma}
Let $\bm{x} \in \mathbb{R}^{N}$ be $K$-sparse, let $ (\hat{\bm{x}})_{k} = \sum_{i \in S_{k}} (\bm{R}\bm{x})_{i}$ with $k = 0,1,...,M-1$, let $S_{k}=\{ k+jM |(\bm{Rx})_{k+jM} \neq 0 \text{ for } j=0,...,\frac{N}{M}-1 \}$, and let $\kappa_{k} =|S_{k}|$.
Then,
 $$\left| \left\{ k |  \kappa_{k} \geq 2 \text{ for } k=0,...,M-1 \right\} \right| < f , $$
 hold for $f=1,...,\frac{K}{2}$ with the probability being larger than $$\displaystyle \geq 1 - \dbinom{M}{f}\dbinom{\frac{N}{M}}{2}^{f}\dbinom{N-2f}{K-2f}/\dbinom{N}{K}.$$
 Moreover, by Stirling's formula, the bound is relaxed into
$$ \geq 1 - \frac{1}{\sqrt{2\pi f}}(\frac{eK^2}{2Mf})^{f}.$$
 \label{lemma: the number of aliasing support}
 \end{lemma}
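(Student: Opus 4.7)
The plan is to reduce the lemma to a ``balls into bins'' counting problem and then close the estimate with a union bound followed by Stirling-type relaxations. First I would observe that, since $\bm{R}=\bm{R}_{1}\bm{R}_{2}$ contains the uniform random permutation $\bm{R}_{1}$, the support of $\bm{R}\bm{x}$ is a uniformly random $K$-subset of $\{0,\ldots,N-1\}$; the Bernoulli signs in $\bm{R}_{2}$ do not alter which coordinates are nonzero and can be ignored for the purpose of bounding $\kappa_{k}$. The operator $\bm{\Phi}$ then partitions the $N$ coordinates into $M$ arithmetic progressions of length $N/M$ (the residue classes modulo $M$), and $\kappa_{k}$ is simply the number of support elements of $\bm{R}\bm{x}$ landing in the $k$-th class.

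Next, writing $X:=\bigl|\{k:\kappa_{k}\geq 2\}\bigr|$, I would apply a union bound; by exchangeability of $(\kappa_{0},\ldots,\kappa_{M-1})$ under the random permutation,
\[
P(X\geq f)\;\leq\;\binom{M}{f}\,P\bigl(\kappa_{0}\geq 2,\ldots,\kappa_{f-1}\geq 2\bigr).
\]
The inner probability I would overcount by first picking $2$ witness positions inside each of the first $f$ bins ($\binom{N/M}{2}^{f}$ configurations), then distributing the remaining $K-2f$ support elements arbitrarily among the $N-2f$ leftover indices ($\binom{N-2f}{K-2f}$ ways), and dividing by $\binom{N}{K}$ for the uniform support distribution. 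This yields exactly the first probability bound stated in the lemma.

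Finally, the Stirling relaxation is mechanical: combining $\binom{M}{f}\leq (eM/f)^{f}/\sqrt{2\pi f}$ (from $f!\geq\sqrt{2\pi f}(f/e)^{f}$), $\binom{N/M}{2}\leq (N/M)^{2}/2$, and the elementary telescoping $\binom{N-2f}{K-2f}/\binom{N}{K}=\prod_{i=0}^{2f-1}(K-i)/(N-i)\leq (K/N)^{2f}$, the three factors collapse to $\frac{1}{\sqrt{2\pi f}}\bigl(eK^{2}/(2Mf)\bigr)^{f}$. The main obstacle is not any individual calculation but keeping the union bound honest: the event $\{\forall k\in T,\ \kappa_{k}\geq 2\}$ does not pin down which specific pair inside each bin serves as the ``witness,'' so overcounting is unavoidable; since only an upper bound on the failure probability is needed, this slack is tolerable. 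The stated range $f\leq K/2$ is precisely what keeps all the binomials nondegenerate.
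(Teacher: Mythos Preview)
Your proposal is correct and follows essentially the same route as the paper: reduce to a uniformly random $K$-subset via the permutation in $\bm{R}$, bound the complement event by the overcount $\binom{M}{f}\binom{N/M}{2}^{f}\binom{N-2f}{K-2f}/\binom{N}{K}$, then relax with Stirling. Your Stirling step is in fact slightly cleaner than the paper's, since your telescoping bound $\binom{N-2f}{K-2f}/\binom{N}{K}\leq (K/N)^{2f}$ yields a genuine inequality, whereas the paper carries the factor $(N-2f)!\,N^{2f}/N!$ and then invokes the large-$N$ approximation $(N-2f)!\,N^{2f}/N!\sim 1$.
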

 \begin{proof}
To simplify the notation, let $E_{f}$ be the event with  $| \{ k |  \kappa_{k} \geq 2 \text{ for } k=0,...,M-1 \} | < f$.
The event is related to the positions of non-zero entries of $\bm{R}\bm{x}$ but is unrelated to their values. Since $\bm{R}$ permutes $\bm{x}$ randomly, the positions of non-zero entries of $\bm{R}\bm{x}$ are uniformly distributed.
Thus, $P\{E_{f}\}$ is considered as a combination problem.
Let $\dbinom{N}{K}$ be all combinations of $N$ positions taking $K$ positions being non-zeros at a time.
Then, $P\{E_{f}\}$ is equal to divide the number of combinations belonging to $E_{f}$ by $\dbinom{N}{K}$.

Instead of calculating $P\{E_{f}\}$ directly, we focus on $P\{E_{f}^{c}\}$, which is the complement of $E_{f}$.
Specifically, $E_{f}^{c}$ is the event with $| \{ k |  \kappa_{k} \geq 2 \text{ for } k=0,...,M-1 \} | \geq f $.
Then,$$ P\{E_{f}^{c}\} \leq   \dbinom{M}{f}\dbinom{\frac{N}{M}}{2}^{f}\dbinom{N-2f}{K-2f}/\dbinom{N}{K}.$$
$\dbinom{M}{f}\dbinom{\frac{N}{M}}{2}^{f}$ means choosing $f$ sets from $S_{0},S_{1},...,S_{M-1}$ such that the chosen sets satisfy $\kappa_{k} = 2$.
 Thus, $2f$ non-zero entries of $\bm{x}$ are arranged.
 Then, $\dbinom{N-2f}{K-2f}$ means the remaining ($K-2f$) non-zero entries of $\bm{x}$ distribute randomly among the remaining $N-2f$ positions.

Consequently, since $P\{E_{f}\}+P\{E_{f}^{c}\} =1$, we have $P\{E_{f}\} = 1 - P\{E_{f}^{c}\}  \geq  1 -  \dbinom{M}{f}\dbinom{\frac{N}{M}}{2}^{f}\dbinom{N-2f}{K-2f}/\dbinom{N}{K}$.

Further, the term $\dbinom{M}{f}\dbinom{\frac{N}{M}}{2}^{f}\dbinom{N-2f}{K-2f}/\dbinom{N}{K} $ is  approximated by:
$$
\begin{aligned}
    & \dbinom{M}{f}\dbinom{\frac{N}{M}}{2}^{f}\dbinom{N-2f}{K-2f}/\dbinom{N}{K} \\
    &\leq  \frac{N^{2f}}{f!(2M)^f}\frac{ (N-2f)! K! }{(K-2f)! N!} \\
    &\leq  \frac{K^{2f}}{f!(2M)^f}\frac{ (N-2f)! N^{2f} }{ N!} \\
    &\sim  \frac{1}{\sqrt{2\pi f}}(\frac{eK^2}{2Mf})^{f}.
\end{aligned}
$$
The last deviation is due to $f! \sim \sqrt{2\pi f}(\frac{f}{e})^{f} $ by Stirling's formula, where the approximation is more accurate when $N$ is large enough.
Thus, $P\{E_{f}\} \geq 1 -  \frac{1}{\sqrt{2\pi f}}(\frac{eK^2}{2Mf})^{f}$.  We complete this proof.
\end{proof}

It should be noted that, if $\kappa_{k} = 1$, it implies $(\hat{\bm{x}})_{k}$ is equal to one of non-zero entries of $\bm{Rx}$.
If $\kappa_{k} = 1$ or $0$ for $0\leq k\leq M-1$, it means no distance distortion and $\| \hat{\bm{x}} \|_{2} = \| \bm{Rx} \|_{2} = \| \bm{x} \|_{2}$.
Thus, based on Lemma \ref{lemma: the number of aliasing support}, we can derive in Theorem \ref{thm: RIP for no distortion} the probability with $\delta_{K}=0$.

\begin{theorem} \label{thm: RIP for no distortion}
Let $\bm{\Phi R} \in \mathbb{R}^{M \times N}$. Then, for any $K$-sparse $\bm{x}$, we have $\delta_{K}=0$ such that
\begin{equation}
\label{eq: RIP for no distortion}
 \| \bm{\Phi R}\bm{x} \|_{2}^{2} = \| \bm{x} \|_{2}^{2},
\end{equation}
with the probability
$$\displaystyle \geq 1 - M\dbinom{\frac{N}{M}}{2}\dbinom{N-2}{K-2}/\dbinom{N}{K}.$$
Moreover, by Stirling's formula, the bound is relaxed into
$$ \geq 1 - \frac{1}{\sqrt{2\pi }}\left( \frac{eK^2}{2M} \right).$$
\end{theorem}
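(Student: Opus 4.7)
The plan is to derive the theorem as a direct specialization of Lemma~\ref{lemma: the number of aliasing support} at $f=1$, combined with a short observation about norm preservation when no aliasing occurs. The authors already note (immediately before the theorem) that if $\kappa_k \in \{0,1\}$ for every $k=0,\ldots,M-1$, then each non-zero entry of $\bm{Rx}$ lands in its own coordinate of $\hat{\bm{x}}=\bm{\Phi R x}$, so $\|\hat{\bm{x}}\|_2^2 = \|\bm{Rx}\|_2^2$. Since $\bm{R}=\bm{R}_1\bm{R}_2$ is the product of a permutation matrix and a $\pm 1$ diagonal matrix, it is itself a signed permutation and therefore an isometry, giving $\|\bm{Rx}\|_2^2=\|\bm{x}\|_2^2$. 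Hence on the event ``$\kappa_k\leq 1$ for all $k$'' we have $\|\bm{\Phi R x}\|_2^2=\|\bm{x}\|_2^2$, which is the $\delta_K=0$ version of the RIP inequality.

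Next I would identify this good event with the event $E_1$ appearing in Lemma~\ref{lemma: the number of aliasing support}: namely, $E_1 = \{|\{k:\kappa_k\geq 2\}| < 1\}$ is exactly the event that $\kappa_k\leq 1$ for every $k$, i.e., no two non-zero entries of $\bm{Rx}$ get aliased into the same downsampled coordinate. Plugging $f=1$ into the combinatorial bound of Lemma~\ref{lemma: the number of aliasing support} yields
\begin{equation*}
P\{E_1\} \;\geq\; 1 - \binom{M}{1}\binom{N/M}{2}^{1}\binom{N-2}{K-2}\Big/\binom{N}{K} \;=\; 1 - M\binom{N/M}{2}\binom{N-2}{K-2}\Big/\binom{N}{K},
\end{equation*}
which is precisely the first probability bound in the theorem statement.

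For the Stirling-relaxed bound, I would substitute $f=1$ into the already-derived asymptotic $\frac{1}{\sqrt{2\pi f}}(eK^2/(2Mf))^{f}$ from Lemma~\ref{lemma: the number of aliasing support}, obtaining the upper tail $\frac{1}{\sqrt{2\pi}}\cdot\frac{eK^2}{2M}$ and hence the claimed probability lower bound $1-\frac{1}{\sqrt{2\pi}}\bigl(\frac{eK^2}{2M}\bigr)$. Nothing beyond these substitutions is needed.

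Honestly, there is no serious obstacle in this proof: the theorem is essentially a corollary of Lemma~\ref{lemma: the number of aliasing support} once one notices the ``$f=1$ means no aliasing, hence isometry'' reduction. The only subtlety worth double-checking is that $\bm{R}$ being a signed permutation really preserves the $\ell_2$ norm (which it does, since both $\bm{R}_1$ and $\bm{R}_2$ act as orthogonal matrices on $\mathbb{R}^N$), so that the distortion really does vanish, not just on the support level but numerically. With that in hand the theorem follows immediately.
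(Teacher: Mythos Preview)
Your proposal is correct and follows essentially the same approach as the paper: identify the event $E_1=\{\kappa_k\leq 1\text{ for all }k\}$ as the no-distortion event, then specialize Lemma~\ref{lemma: the number of aliasing support} to $f=1$ for both the combinatorial and Stirling bounds. Your argument is in fact slightly more explicit than the paper's (you spell out why $\bm{R}$ is an isometry and give the inclusion $E_1\subseteq\{\|\bm{\Phi R x}\|_2^2=\|\bm{x}\|_2^2\}$ rather than asserting equality), but the strategy is identical.
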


\begin{proof}
Following the same notations in the proof of Lemma \ref{lemma: the number of aliasing support}, let $E_{1}$ be the event with $| \{ k |  \kappa_{k} \geq 2 \text{ for } k=0,...,M-1 \} | < 1$.
  In other words, if $E_{1}$ occurs, it means $\kappa_{k} = 1$ or $0$ for $0\leq k\leq M-1$.
  Thus, $P\{ \| \bm{\Phi R}\bm{x} \|_{2}^{2} = \| \bm{x} \|_{2}^{2} \} = P\{ E_{1} \}$.
  $P\{ E_{1} \}  \geq 1 - M\dbinom{\frac{N}{M}}{2}\dbinom{N-2}{K-2}/\dbinom{N}{K}$ is calculated by setting $f=1$ in Lemma \ref{lemma: the number of aliasing support}. We complete this proof.
 \end{proof}
Theorem \ref{thm: RIP for no distortion} indicates that, if $K \leq O(\sqrt{M})$, the probability of $\delta_{K}=0$ is high enough.
We will validate  Theorem \ref{thm: RIP for no distortion} by experiments later.

We further extend Theorem \ref{thm: RIP for no distortion} to consider different values of $\delta_K$.
Nevertheless, if $\kappa_{k} > 1$, $(\bm{\Phi}\bm{R} \bm{x})_{k}$ is the sum of at least two non-zero entries of $\bm{x}$.
In this case, different signals ($\bm{x}$'s) will led to different distance distortions.
To simplify the problem, we assume $\bm{x} \in \left\{ 0,1 \right\}^{N}$.
Under the circumstance, theoretical bound for $\delta_{K}$ is derived in Theorem \ref{thm: RIP for ce}.

\begin{theorem} \label{thm: RIP for ce}
Let $\bm{\Phi R} \in \mathbb{R}^{M \times N}$ and $g=\dbinom{\frac{N}{M}}{2}$. Then, for any $K$-sparse $\bm{x} \in \left\{ 0,1 \right\}^{N}$ and any $\delta_{K} \in \left\{0,\frac{2g}{K},\frac{4g}{K}...,g-\frac{2g}{K} \right\}$, we have
\begin{equation}
\label{eq: RIP for ce}
\displaystyle (1-\delta_{K})\| \bm{x} \|_{2}^{2} \leq \| \bm{\Phi R}\bm{x} \|_{2}^{2} \leq (1+\delta_{K})\| \bm{x} \|_{2}^{2},
\end{equation}
 with the probability
  $$\displaystyle \geq 1 - \dbinom{M}{f}\dbinom{\frac{N}{M}}{2}^{f}\dbinom{N-2f}{K-2f}/\dbinom{N}{K},$$
 where $f=\frac{\delta_{K}K}{2g}+1$. Moreover, by Stirling's formula, the bound is relaxed into
$$ \geq 1 - \frac{1}{\sqrt{2\pi f}}(\frac{eK^2}{2Mf})^{f}.$$
\end{theorem}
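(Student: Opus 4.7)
The plan is to combine Lemma~\ref{lemma: the number of aliasing support} (which controls how many ``collision bins'' the downsampling produces) with a direct algebraic expansion of $\|\bm{\Phi R x}\|_2^2$, translating the collision count into the desired RIP distortion bound. The key structural observation is that, because $\bm{x}\in\{0,1\}^N$ has $K$ ones, the entries of $\bm{Rx}$ lie in $\{-1,0,+1\}$, so $\|\bm{x}\|_2^2 = K = \sum_k \kappa_k$ and $(\hat{\bm{x}})_k = \sum_{i\in S_k} \epsilon_i$ with $\epsilon_i \in \{-1,+1\}$.

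First, I would expand the squared norm and extract the ``diagonal'' contribution to separate the distortion:
$$\|\bm{\Phi R x}\|_2^2 \;=\; \sum_{k=0}^{M-1}\Bigl(\sum_{i\in S_k}\epsilon_i\Bigr)^{\!2} \;=\; \sum_k \kappa_k \;+\; 2\sum_k\sum_{\substack{i<j\\ i,j\in S_k}} \epsilon_i\epsilon_j \;=\; \|\bm{x}\|_2^2 \;+\; 2\sum_k\sum_{\substack{i<j\\ i,j\in S_k}} \epsilon_i\epsilon_j.$$
Taking absolute values and using the trivial bound $|\epsilon_i\epsilon_j|=1$ gives
$$\bigl|\|\bm{\Phi R x}\|_2^2 - \|\bm{x}\|_2^2\bigr| \;\le\; 2\sum_k \binom{\kappa_k}{2}.$$
Bins with $\kappa_k\le 1$ contribute zero, and by construction every bin satisfies $\kappa_k \le N/M$, so each collision bin contributes at most $\binom{N/M}{2}=g$. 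Writing $f_0 = |\{k:\kappa_k\ge 2\}|$, one obtains $|\|\bm{\Phi R x}\|_2^2 - \|\bm{x}\|_2^2| \le 2g f_0$.

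Next, I would apply Lemma~\ref{lemma: the number of aliasing support} with $f = \delta_K K/(2g)+1$. This $f$ is a positive integer precisely for $\delta_K\in\{0,2g/K,4g/K,\dots\}$, and the condition $f\le K/2$ needed by the lemma is equivalent to $\delta_K \le g - 2g/K$, matching the range declared in the statement. The lemma then yields $f_0<f$, hence $f_0\le \delta_K K/(2g)$, with probability at least $1-\binom{M}{f}\binom{N/M}{2}^{\!f}\binom{N-2f}{K-2f}/\binom{N}{K}$. Substituting back gives $2g f_0 \le \delta_K \|\bm{x}\|_2^2$, which is exactly the two-sided RIP inequality, and the Stirling relaxation is inherited verbatim from Lemma~\ref{lemma: the number of aliasing support}.

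The only routine bookkeeping is verifying the admissible values of $\delta_K$ and the integrality of $f$; the more delicate point is the reason for restricting to $\bm{x}\in\{0,1\}^N$. In this regime the non-zero entries of $\bm{Rx}$ all have unit magnitude, so the cross-terms contributed by a collision bin are uniformly bounded by $\binom{\kappa_k}{2}$, independently of the signal. For general real $\bm{x}$ the same expansion would produce a bound involving $|x_ix_j|$ within each bin, which cannot be controlled by a single $\delta_K$ without additional signal-dependent assumptions. I expect the main conceptual obstacle to be precisely this step: recognizing that the binary assumption is what decouples the combinatorial collision count supplied by Lemma~\ref{lemma: the number of aliasing support} from the analytic distortion one wants to bound.
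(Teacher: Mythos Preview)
Your proposal is correct and follows essentially the same route as the paper's own proof: expand $\|\bm{\Phi R x}\|_2^2$ into the diagonal term $\sum_k \kappa_k = \|\bm x\|_2^2$ plus cross terms, bound each collision bin's cross-term contribution by $g=\binom{N/M}{2}$ using $\kappa_k\le N/M$ and $|\epsilon_i\epsilon_j|=1$, and then invoke Lemma~\ref{lemma: the number of aliasing support} with $f=\delta_K K/(2g)+1$ to control the number of collision bins. Your write-up is, if anything, tidier than the paper's, since you handle both sides of the RIP inequality at once via the absolute value and make the integrality/range check on $f$ explicit.
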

\begin{proof}
We use the same notation and definition in Lemma \ref{lemma: the number of aliasing support}. If $E_{f}$ occurs, without loss of generality, let $\kappa_{i} \geq 2 $ for $i=0,...,f-2$ and $\kappa_{j} = 1$ for $j=f-1,...,M-1$. Then,
\begin{equation}
\begin{aligned}
          &\|\bm{\Phi R}\bm{x} \|_2^2 \leq (1+\delta_{K})\| \bm{x} \|_2^2 \\
          &\Rightarrow  \delta_{K} = \max_{\bm{x}} \frac{ \|\bm{\Phi R}\bm{x} \|_2^2 - \| \bm{x} \|_2^2 }{\| \bm{x} \|_2^2}  \\
          &\Rightarrow  \delta_{K} = \max_{\bm{x}} \frac{ \sum_{k=0}^{f-2}\sum_{i \in S_{k}, j \in S_{k}/i } 2(\bm{Rx})_{i}(\bm{Rx})_{j}  }{\| \bm{x} \|_2^2} \\
          &\Rightarrow  \delta_{K} = \frac{2g(f-1)}{K}
\end{aligned}
\label{eq: rip deviation}
\end{equation}
The derivation in last line of Eq. (\ref{eq: rip deviation}) comes from the fact that the non-zero entries of $\bm{x}$ are $1$. Thus, $2(\bm{Rx})_{i}(\bm{Rx})_{j}$ has maximal value $2$.
 Further, the cardinality of $\{ (i,j) | i \in S_{k}, j \in S_{k}/i \}$ is $\dbinom{\kappa_{k}}{2}$. In the worst case, $\kappa_{k}=\frac{N}{M}$. Thus,
 $$ \sum_{i \in S_{k}, j \in S_{k}/i } 2(\bm{Rx})_{i}(\bm{Rx})_{j} = \sum_{i \in S_{k}, j \in S_{k}/i } 2 = 2\dbinom{\frac{N}{M}}{2}=2g.$$

Consequently, $\frac{2g(f-1)}{K} =  \delta_{K}$ or $f=\frac{\delta_{K}K}{2g}+1$. If $f=1$. it implies that $0 = \delta_{K}$ with the probability $P\{ E_{1}\}$, that is a special case like Theorem \ref{thm: RIP for no distortion}.
Since $f \in \left\{ 1,2,...,\frac{K}{2} \right\}$, we have  $\delta_{K} \in \left\{0,\frac{2g}{K},\frac{4g}{K},...,g-\frac{2g}{K}\right\}$ along with the corresponding probability $P\{ E_{f}\} = P\{ E_{\frac{\delta_{K}K}{2g}+1}\}$.   We complete this proof.
\end{proof}

We want to briefly discuss why we assume $\bm{x} \in \left\{ 0,1 \right\}^{N}$ instead of other signal types such as Gaussian random signal. The larger $\sum_{k=0}^{f-1}\sum_{i \in S_{k}, j \in S_{k}/i } 2(\bm{Rx})_{i}(\bm{Rx})_{j}$ is, the large $\delta_{K}$ is. Thus, assuming $\bm{x}$ has constant energy such that $\|\bm{x}\|_{2} = c$, the largest $\delta_{K}$ is equivalent to solving the following optimization problem:
\begin{equation}
\begin{aligned}
        \max_{\bm{x}} &\sum_{k=0}^{f-1}\sum_{i \in S_{k}, j \in S_{k}/i } 2(\bm{Rx})_{i}(\bm{Rx})_{j} \\
            &\text{subject to } \|\bm{x}\|_{2} = c.
\end{aligned}
\label{eq: optimization}
\end{equation}
By solving the optimization problem by Lagrange multiplier, the optimal value is achieved with the constraint that $(\bm{Rx})_{i}=(\bm{Rx})_{j}$ with $i,j \in S_{k}$ for $k=0,...,f-1$. If $\bm{R}$ is a deterministic matrix, it is easy to obtain optimal solution $\bm{x}$. However, $\bm{R}$ is a randomizer resulting in random locations and random sign of $\bm{x}$. By assuming $\bm{x} \in \left\{ 0,1 \right\}^{N}$, $(\bm{Rx})_{i}=(\bm{Rx})_{j}$ holds with high probability. We emphasize that rigorous proof is still absent and should be discussed in the future work.

%it should be noted that the proof is based on the assumption of random locations of non-zero entries. Nature signals maybe lack randomness. Thus, be letting $\bm{R}$ be the global randomizer to forces $\bm{Rx}$ to meet the assumption. In addition, the angle between $\bm{R}\bm{x}_{1}$ and $\bm{R}\bm{x}_{2}$ is still the same with that between $\bm{x}_{1}$ and $\bm{x}_{2}$. This is the main concern to assume $\bm{x} \in \left\{ 0,1 \right\}^{N}$.

%In addition, Theorem \ref{thm: RIP for ce} requires the assumption $\bm{x} \in \left\{ -1,0,1 \right\}^{N}$, but Lemma \ref{lemma: the number of aliasing support} holds for all $K$-sparse signals. If $f$

To check whether $\Phi$ is good enough to satisfy $\delta_{K}$-RIP from empirical and theoretical results, we compare it  with Gaussian random matrix, which is admitted to be a good choice for satisfying $\delta_{K}$-RIP.
Let $\bm{A}$ be designed as either a Gaussian random matrix drawn from $\mathcal{N}\left( 0,\frac{1}{M} \right)$ or the proposed projection matrix.
A Monte Carlo method is used to estimate RIP.
By generating a set of $K$-sparse signals ({\em i.e.,} $\bm{x}$'s),  where non-zero entries are $1$'s,
$E\left\{ \delta_{K} \right\}$ can be estimated.
Table \ref{table: RIC for different matrices} shows the empirical results, where each one is obtained from the mean of $100,000$ trials.
The proposed matrix benefits from the sparsity property and outperforms Gaussian random matrix.
%The simulation results actually meet the theoretical prediction.
Basically, the simulation results actually meet the theoretical prediction.
Moreover, Table \ref{table: RIC for different matrices with gaussian signal} shows the case that non-zero entries of $\bm{x}$ are drawn from $\mathcal{N}(0,1)$.
We can see that $\delta_{K}$'s are smaller than those in Table \ref{table: RIC for different matrices}.

%However, the drawbacks of the proposed method is that the reconstruction of $\bm{x}$ given $\bm{y}$ and $\bm{A}$, which is the hot issue in compressive sensing (CS) \cite{Candes2005}, is impossible because the mutual incoherence between columns of $\bm{x}$ is 1. But, embedding does not consider the reconstruction.

In addition, the lower bound of probability of satisfying $\delta_{K}$-RIP in Theorem \ref{thm: RIP for ce} is tighter than that in Theorem \ref{thm: RIP}, as shown in Fig. \ref{fig:RIP Estimation}, where solid curves denote the empirical results generated by Monte Carlo method and dash curves denote the corresponding theoretical lower bounds based on Theorem \ref{thm: RIP} and  Theorem \ref{thm: RIP for ce}.
Fig. \ref{fig:RIP Estimation} reveals that the lower bound in Theorem \ref{thm: RIP} is not trivial only when $\bm{x}$ is very sparse.
Otherwise, it is always zero.
Fig. \ref{fig:RIP Histogram} shows the histogram of $\delta_{K}$ under different settings of $N$, $M$, and $K$.
The horizontal axis in Fig. \ref{fig:RIP Histogram}(b) is discrete because of $\bm{x} \in \left\{ 0,1 \right\}^{N}$.
In sum, the proposed projection matrix has a higher probability to satisfy $\delta_{K}$-RIP with small $\delta_{K}$.

Consequently, since our designed $\bm{\Phi R}$ can satisfy $\delta_{K}$-RIP, it also preserves similarity between two data, as proved in \cite{Chang2013}.
Combined with the fact that $\bm{D}=circ(\bm{d}^0)$, with $\bm{d}^0$ being a Gaussian random vector, also preserves the angle between two data \cite{ChangeFu2014}, our proposed $\bm{D \Phi R}$ still retains angle-preserving property.

\begin{figure}[h]
\begin{minipage}[b]{.5\linewidth}
  \centering{\epsfig{figure=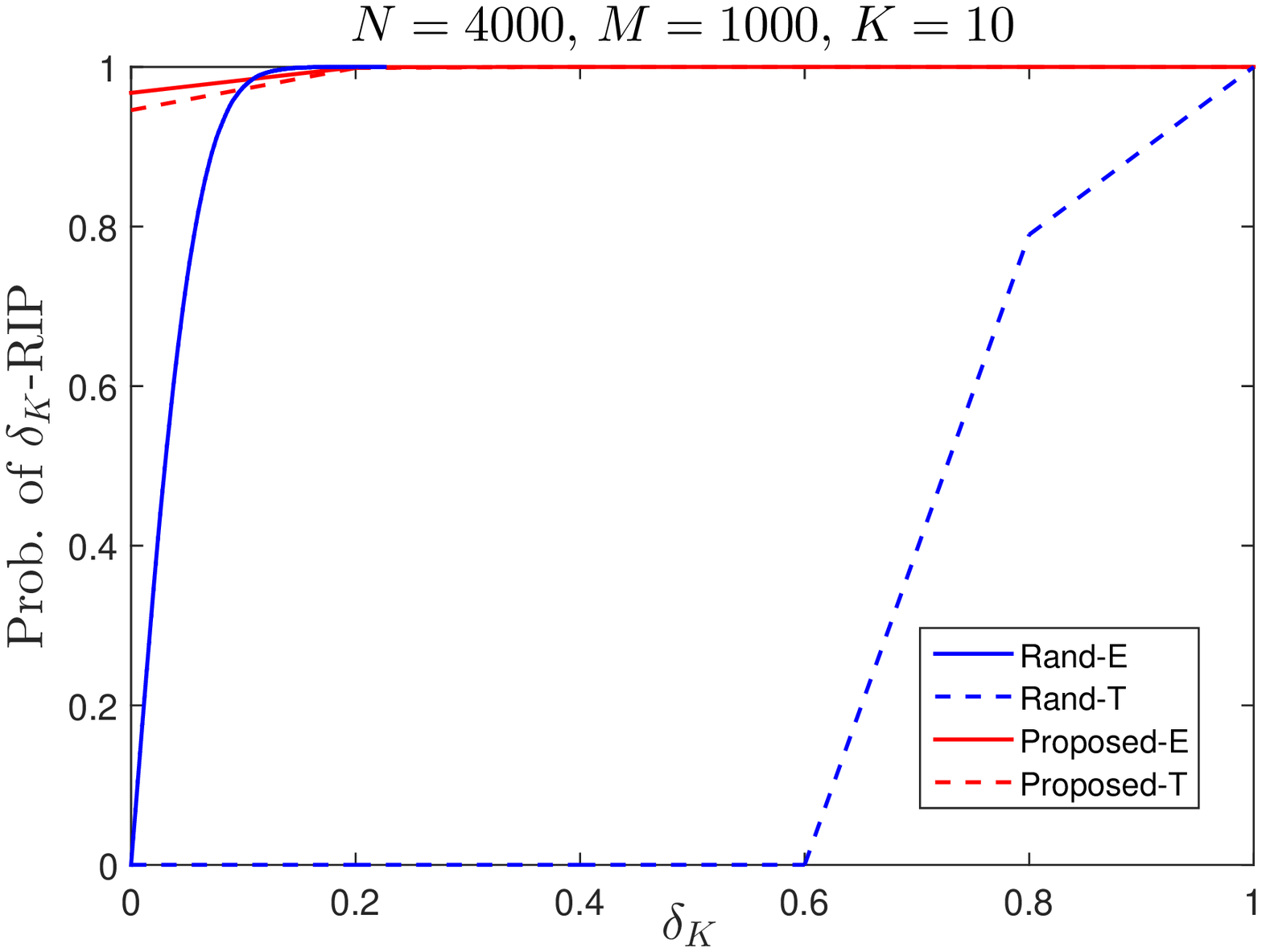,width=1.7in}}
  \centerline{(a)}
\end{minipage}
\begin{minipage}[b]{.48\linewidth}
  \centering{\epsfig{figure=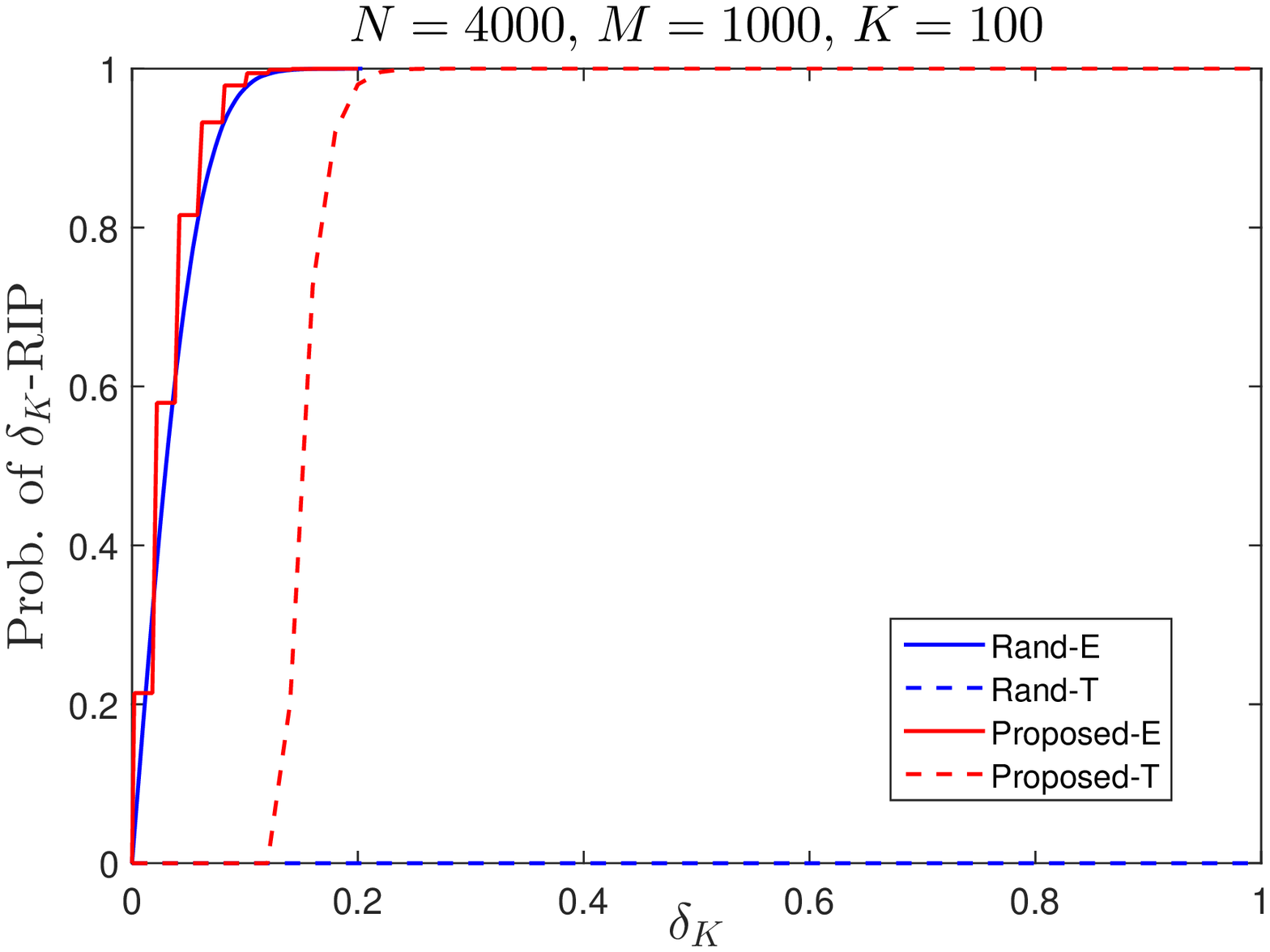,width=1.7in}}
  \centerline{(b)}
\end{minipage}
\caption{Probability of satisfying $\delta_{K}$-RIP versus $\delta_{K}$ when $\bm{A}$ is either the proposed matrix or a Gaussian random matrix.
Proposed-E and Rand-E denote empirical results while Proposed-T and Rand-T denote the lower bounds of probability in Theorem \ref{thm: RIP for ce} and Theorem \ref{thm: RIP}, respectively.
(a)$N=4000$, $M=1000$, $K=10$. (b)$N=4000$, $M=1000$, $K=100$.}
\label{fig:RIP Estimation}
\end{figure}

\begin{figure}[h]
\begin{minipage}[b]{.5\linewidth}
  \centering{\epsfig{figure=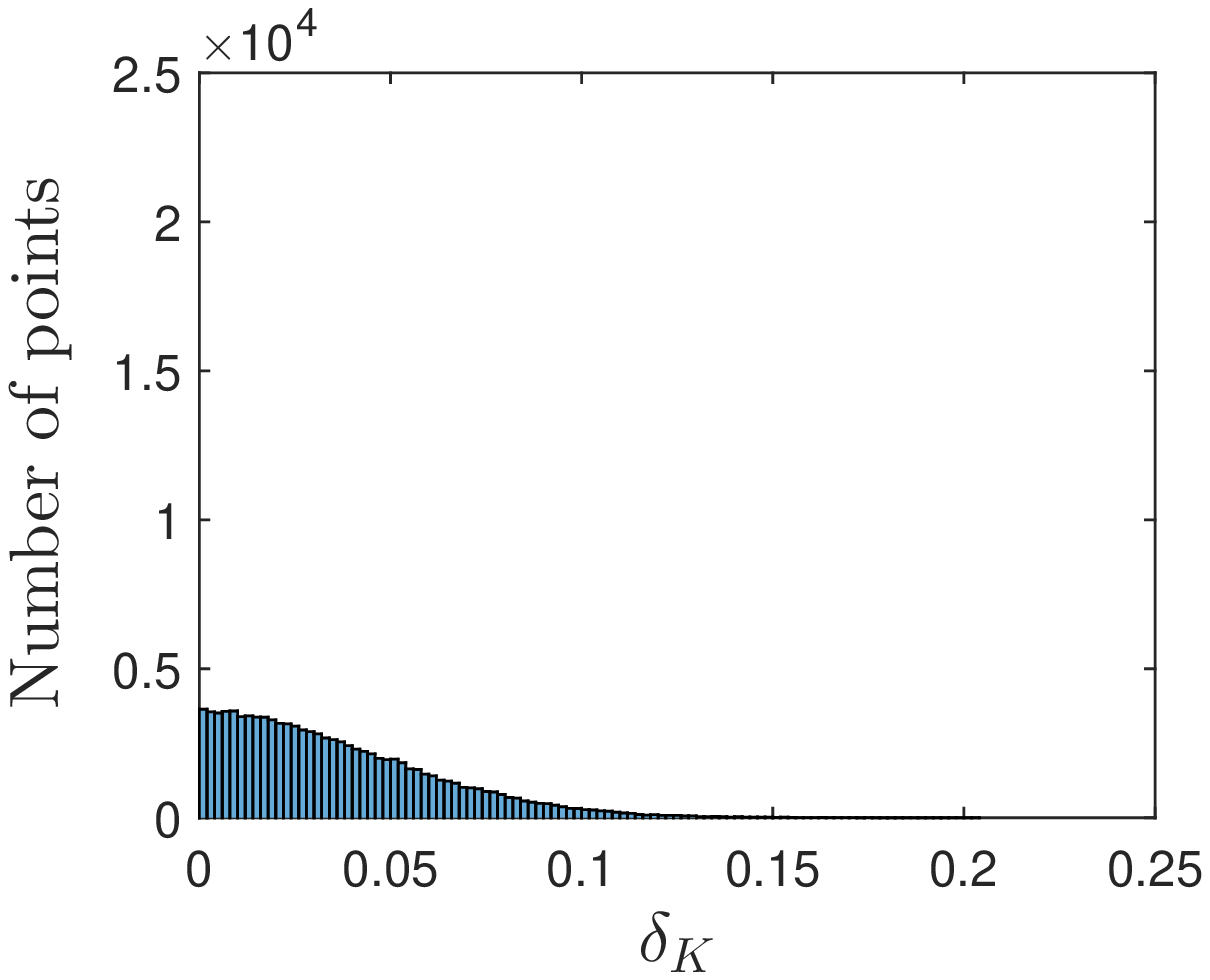,width=1.7in}}
  \centerline{(a)}
\end{minipage}
\begin{minipage}[b]{.48\linewidth}
  \centering{\epsfig{figure=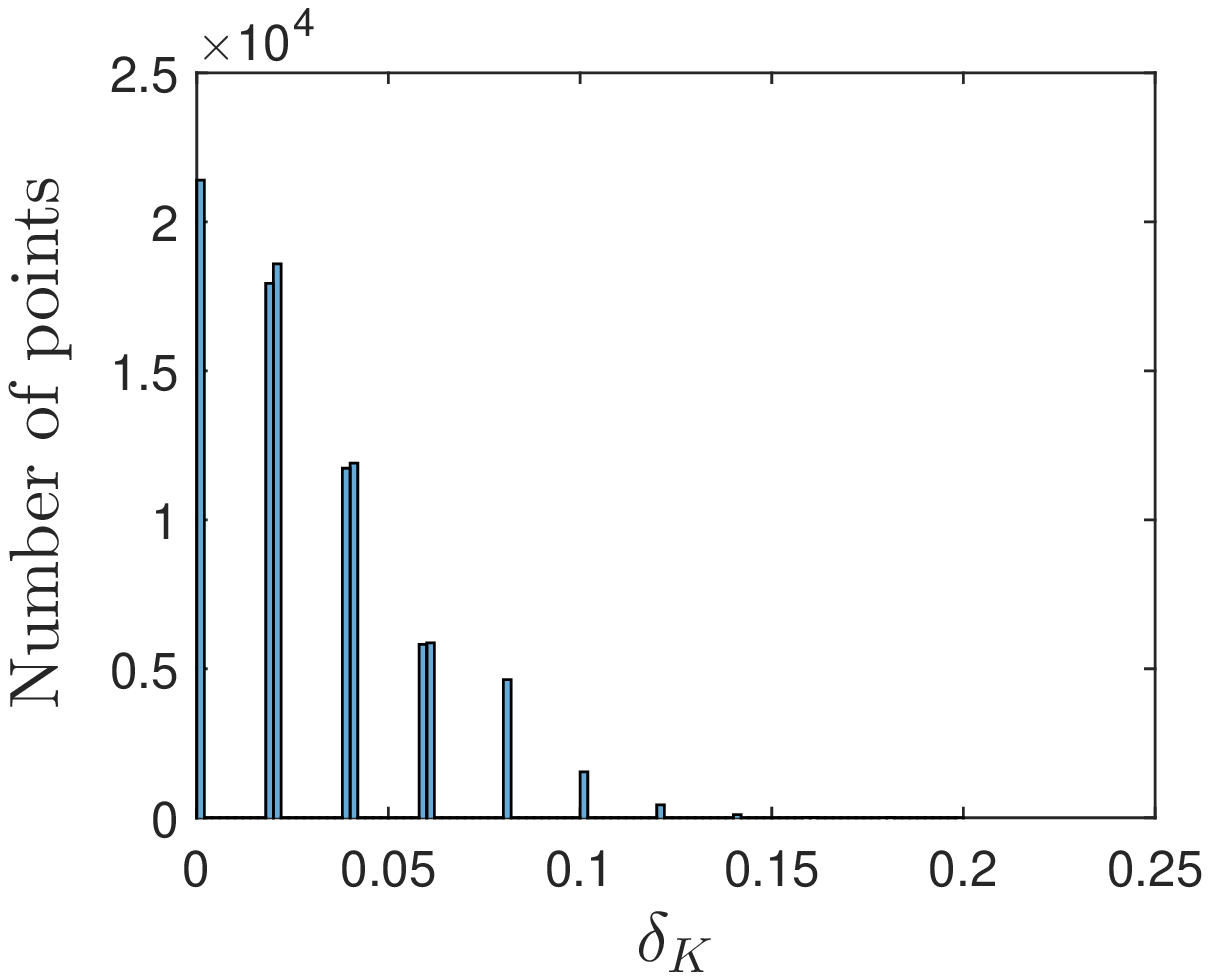,width=1.7in}}
  \centerline{(b)}
\end{minipage}
\begin{minipage}[b]{.5\linewidth}
  \centering{\epsfig{figure=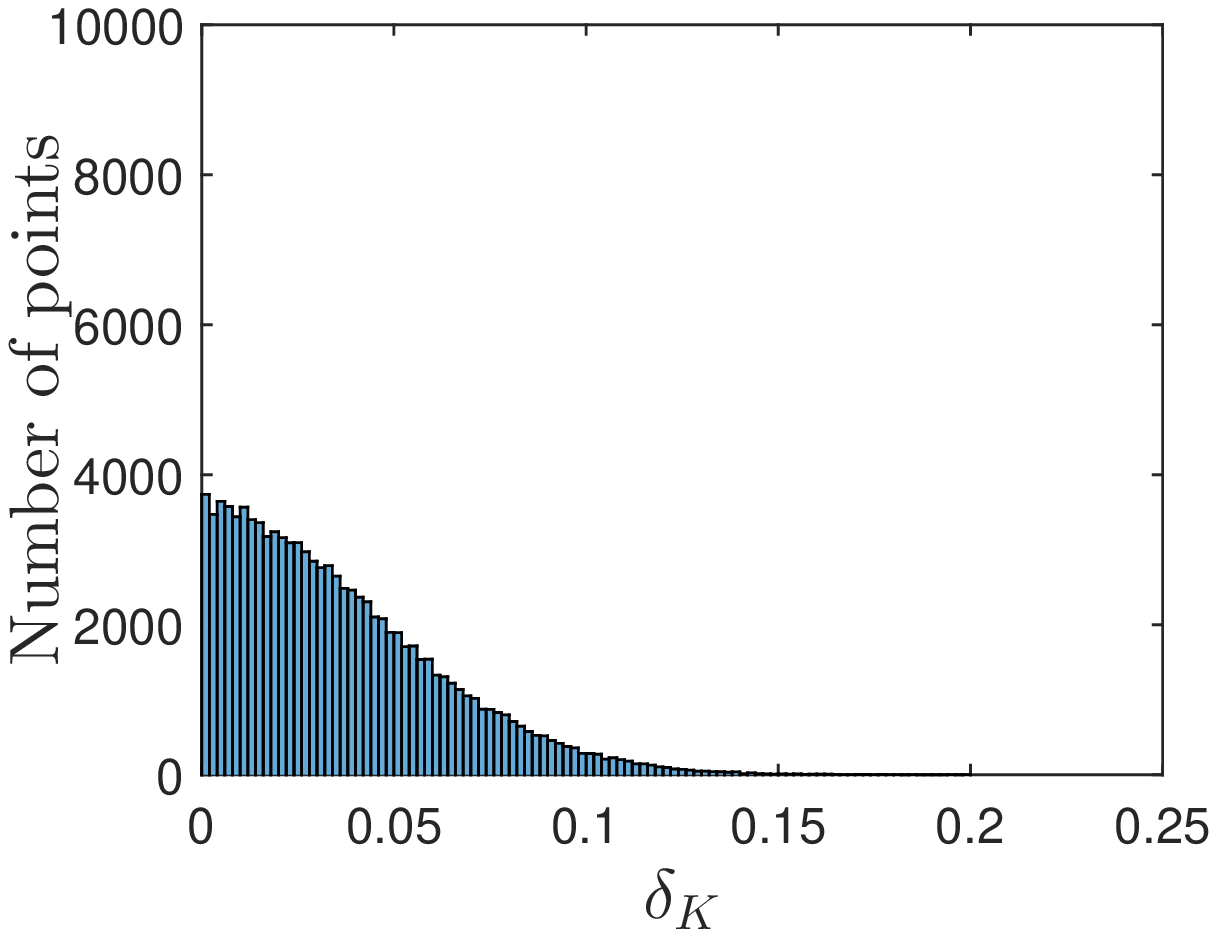,width=1.7in}}
  \centerline{(c)}
\end{minipage}
\begin{minipage}[b]{.48\linewidth}
  \centering{\epsfig{figure=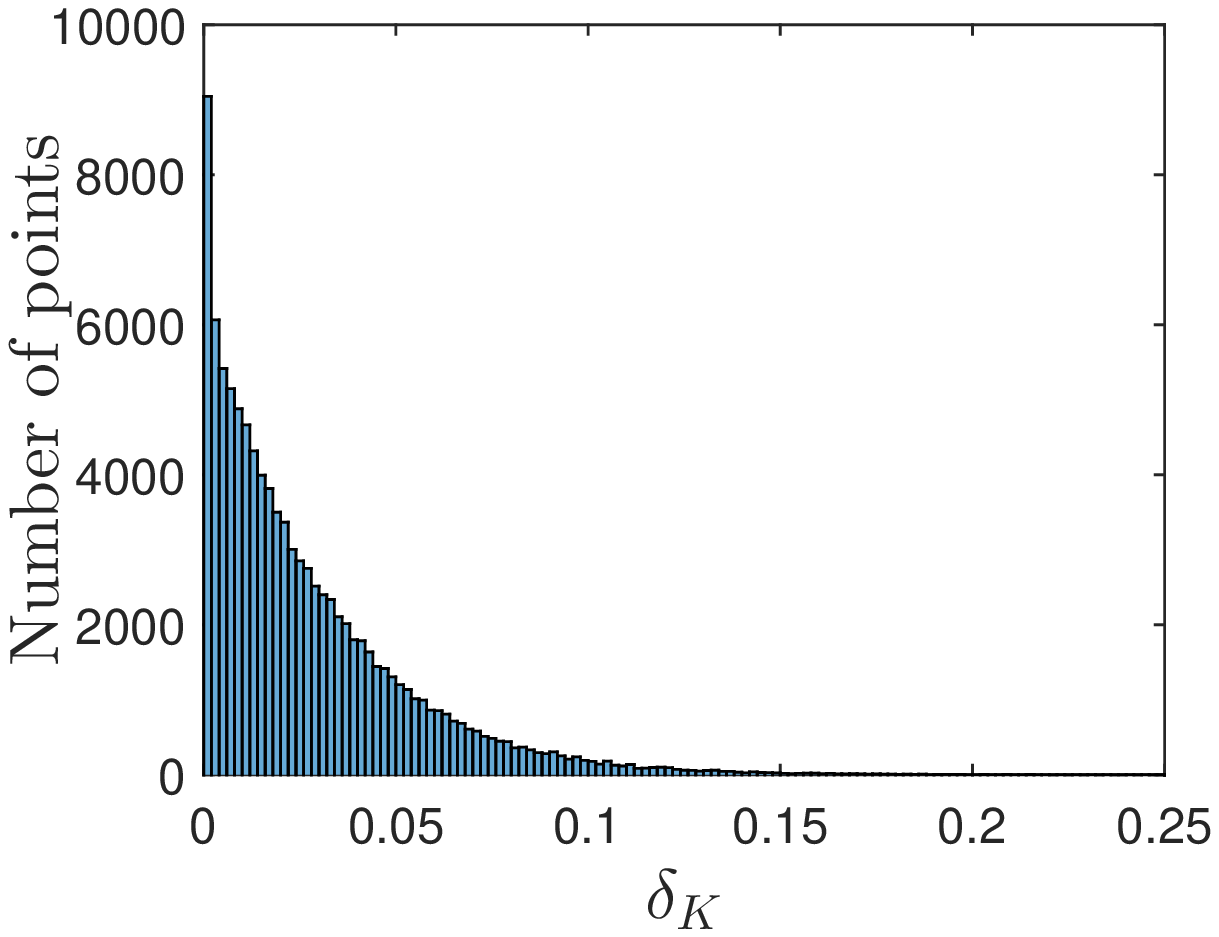,width=1.7in}}
  \centerline{(d)}
\end{minipage}
\caption{Histogram for density estimation of $\delta_{K}$ with $N=4000$, $M=1000$, and $K=100$.
(a)(b) $\bm{x} \in \left\{0,1 \right\}^{N}$. (c)(d) $\bm{x}$ is drawn from $\mathcal{N}(0,1)$.  (a)(c) Gaussian random matrix. (b)(d) The proposed matrix. }
\label{fig:RIP Histogram}
\end{figure}

\begin{table}[!htbp]
\caption{Estimation of $\delta_{K}$ for $\bm{A}$ being either a Gaussian random matrix or the proposed projection matrix under $N=4000$ and different $M$ and $K$.
The result is presented by $a/b$, where $a$ and $b$ denote $E\left\{ \delta_{K} \right\}$'s obtained by Gaussian random matrix and the proposed matrix, respectively.
Bold represents the better results.}
\footnotesize
\begin{tabular}{|c|c|c|c|c|c|}
\hline
\backslashbox{M}{K}& $25$ & $50$ & $100$ & $200$ & $400$ \\
\hline
1000 & .035/\textbf{.015} & .036/\textbf{.025} & .036/\textbf{.028} & .0.37/\textbf{.030} & .037/\textbf{.031} \\
500 & .048/\textbf{.031} & .049/\textbf{.040} & .050/\textbf{.044} & .050/\textbf{.045} & .051/\textbf{.046} \\
250 & .070/\textbf{.055} & .070/\textbf{.065} & .069/\textbf{.066} & .071/\textbf{.067} & .073/\textbf{.069} \\
125 & .101/\textbf{.093} & .100/\textbf{.095} & .105/\textbf{.097} & .101/\textbf{.098} & .101/\textbf{.100} \\
\hline
\end{tabular}
\normalsize
\label{table: RIC for different matrices}
\end{table}

\begin{table}[!htbp]
\caption{Estimation of $\delta_{K}$ for $\bm{A}$ being either a Gaussian random matrix or the proposed projection matrix.
Except that the non-zero entries of $\bm{x}$ are drawn from $\mathcal{N}(0,1)$, other settings follow Table \ref{table: RIC for different matrices}.  }
\footnotesize
\begin{tabular}{|c|c|c|c|c|c|}
\hline
\backslashbox{M}{K}& $63$ & $125$ & $250$ & $500$ & $1000$ \\
\hline
1000 & .033/\textbf{.011} & .035/\textbf{.018} & .037/\textbf{.026} & .0.36/\textbf{.028} & .036/\textbf{.030} \\
500 & .048/\textbf{.025} & .050/\textbf{.035} & .050/\textbf{.039} & .050/\textbf{.042} & .051/\textbf{.044} \\
250 & .070/\textbf{.047} & .068/\textbf{.061} & .069/\textbf{.063} & .072/\textbf{.065} & .072/\textbf{.068} \\
125 & .101/\textbf{.080} & .102/\textbf{.091} & .101/\textbf{.094} & .101/\textbf{.096} & .101/\textbf{.97} \\
\hline
\end{tabular}
\normalsize
\label{table: RIC for different matrices with gaussian signal}
\end{table}

%We note that if the range of $\bm{x}$ belongs to the finite set ({\em i.e.} $\bm{x}\in \left\{ -1,1 \right\}^{N}$), because all entries of the proposed matrix are one, the range of $\bm{A}\bm{x}$ is also finite and maybe includes zero. When $(\bm{A}\bm{x})_{i}=0$, $sign((\bm{A}\bm{x})_{i})$ is ambiguous to be assigned to be 1 or -1. In this case, the performance is possible to degrade. To avoid the problem, the pre-processing for $\bm{x}$, such as subtracting $\bm{x}$ by the mean in \cite{Gong2013}, is efficient.

\section{Experimental Results}\label{sec:exper}
Simulations were conducted in Matlab environment with an Intel CPU Q6600 and $16$ GB RAM under MS Win7 ($64$ bits).
Since we focus on the comparison of computation and storage costs, we only compare the proposed algorithm with some selected data-independent binary embedding algorithms, including
\begin{enumerate}
\item [$\bullet$] Locality Sensitive Hashing (LSH) \cite{Charikar2002}: $A$ is a Gaussian random matrix. This method is considered as a baseline in terms of performance and computation cost.
\item [$\bullet$] CBE-rand \cite{ChangeFu2014}: $A$ is designed as a circulant matrix, where the seed vector is a Gaussian random vector. This method focuses on fast embedding by FFT.
\item [$\bullet$] BP-rand \cite{Gong2013}: Use two matrices to separably project data.
We follow the data-independent setting in \cite{Gong2013}, where two matrices are designed as Gaussian random matrices without learning.
\end{enumerate}

Except the proposed method and BP-rand, all other codes were downloaded from http://www.unc.edu/~yunchao/.
%LSH, CBE-rand, BP-rand and the proposed method are data-independent approaches.
%\textcolor{red}{Since we focus on the comparison of computation and storage costs, data-dependent techniques such as ITQ are ignored and must be equal to or slower than LSH (with/without considering learning). But, data-dependent techniques are expected to have better performance. }
According to the following evaluations, our method is concluded to be very efficient to compute binary codes with low memory requirements and exhibit performance of image classification and retrieval being comparable to state-of-the-art data-independent projection techniques.

\begin{figure}[h]
\begin{minipage}[b]{.5\linewidth}
  \centering{\epsfig{figure=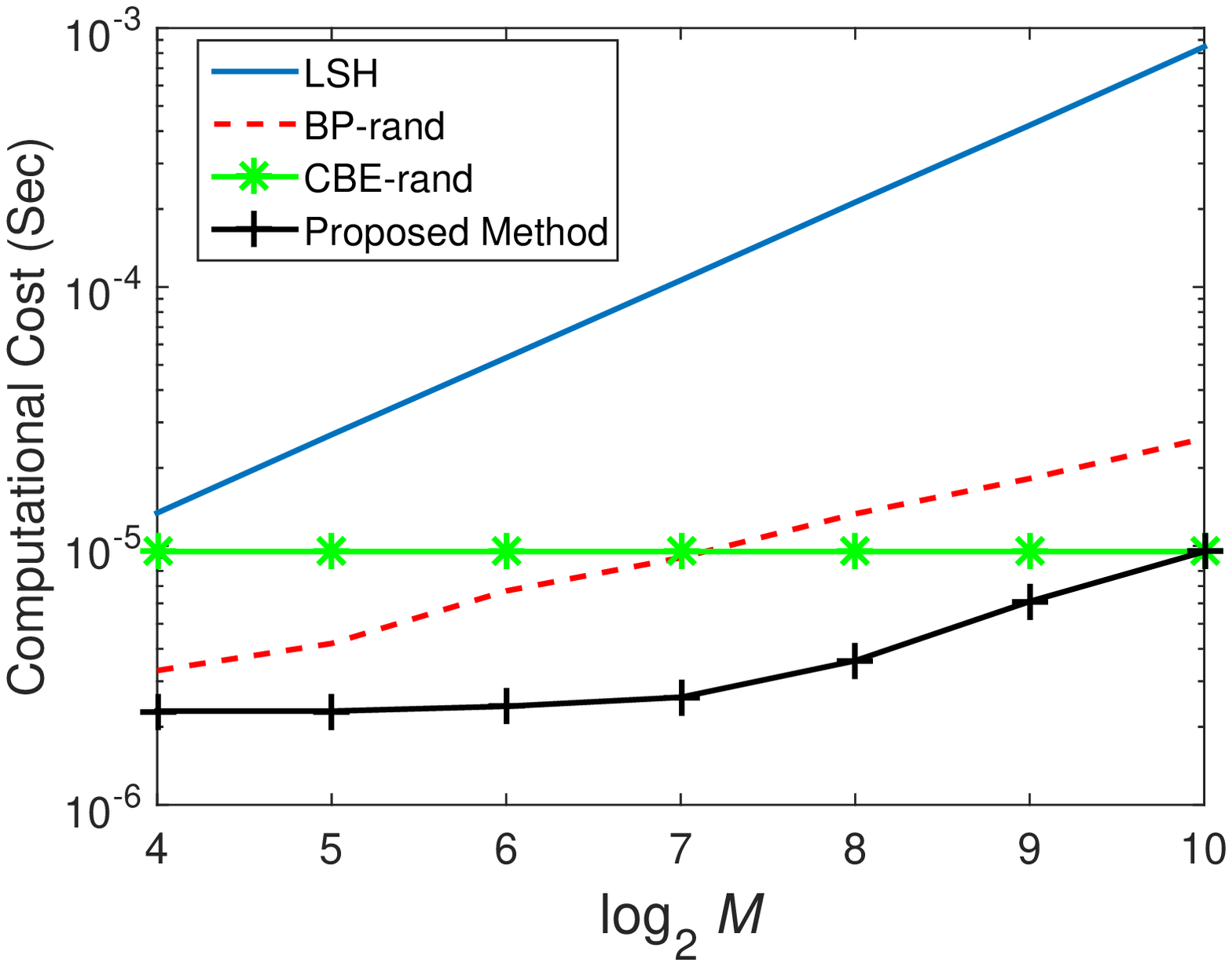,width=1.7in}}
  \centerline{(a)}
\end{minipage}
\begin{minipage}[b]{.48\linewidth}
  \centering{\epsfig{figure=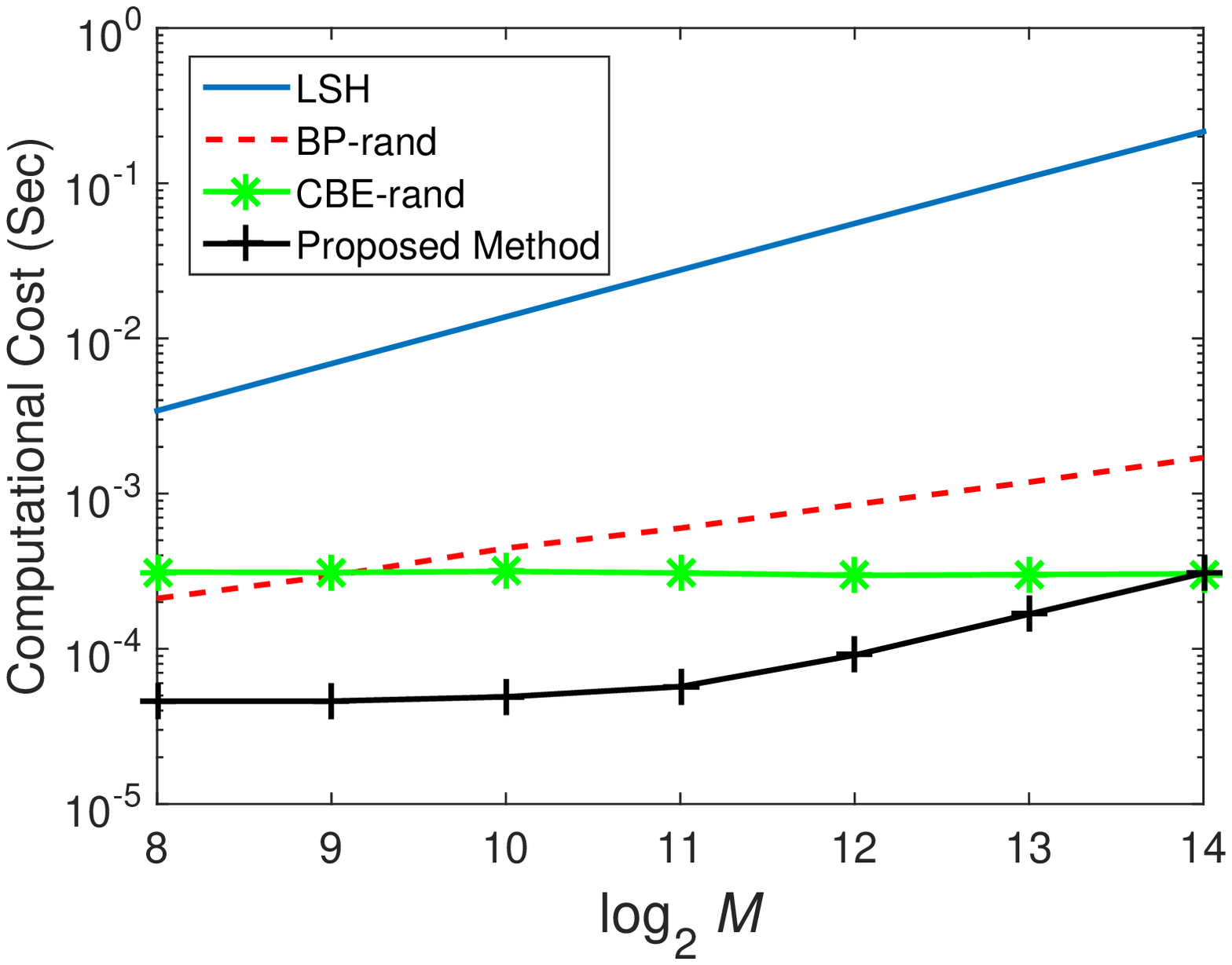,width=1.7in}}
  \centerline{(b)}
\end{minipage}
\begin{minipage}[b]{.5\linewidth}
  \centering{\epsfig{figure=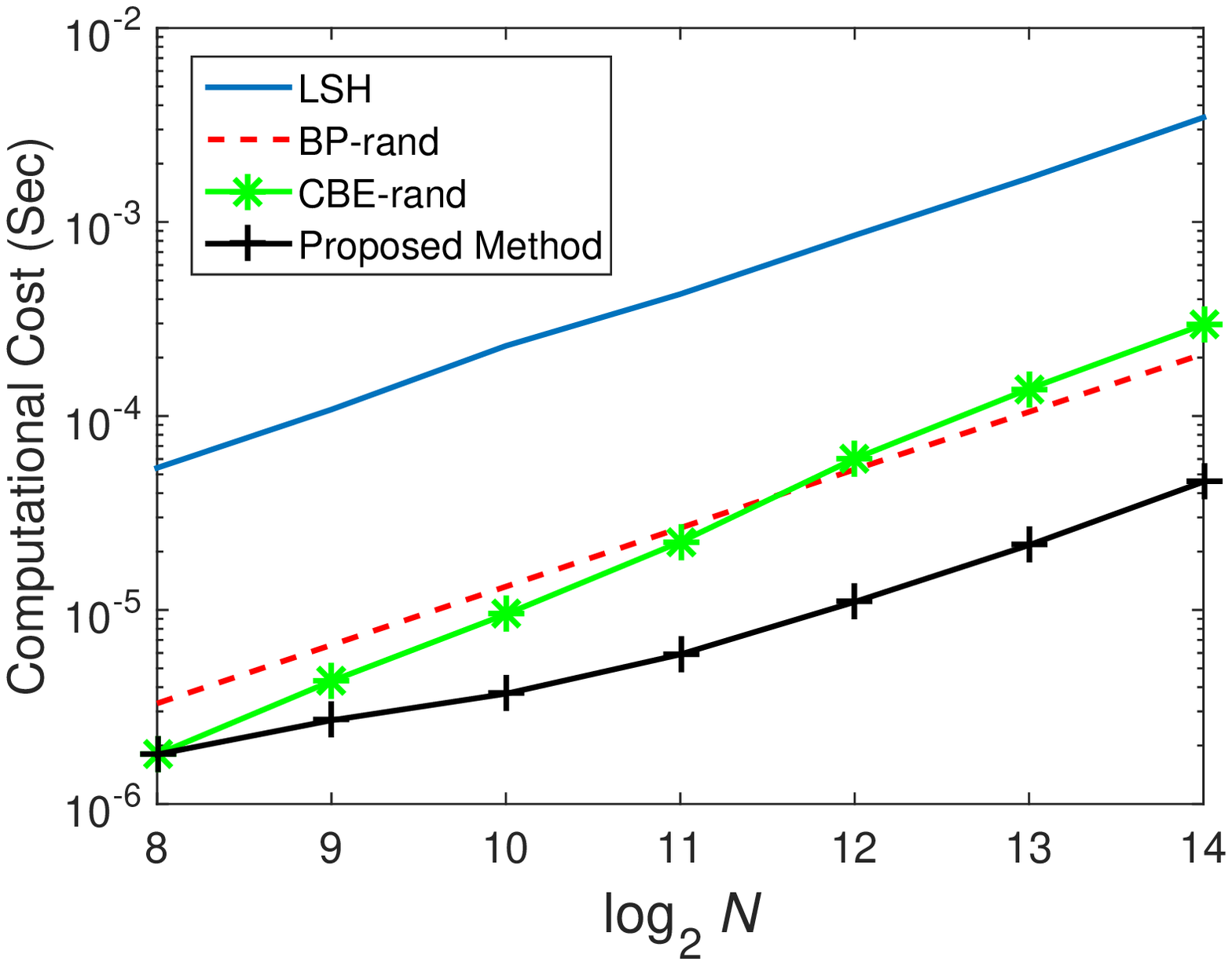,width=1.7in}}
  \centerline{(c)}
\end{minipage}
\begin{minipage}[b]{.48\linewidth}
  \centering{\epsfig{figure=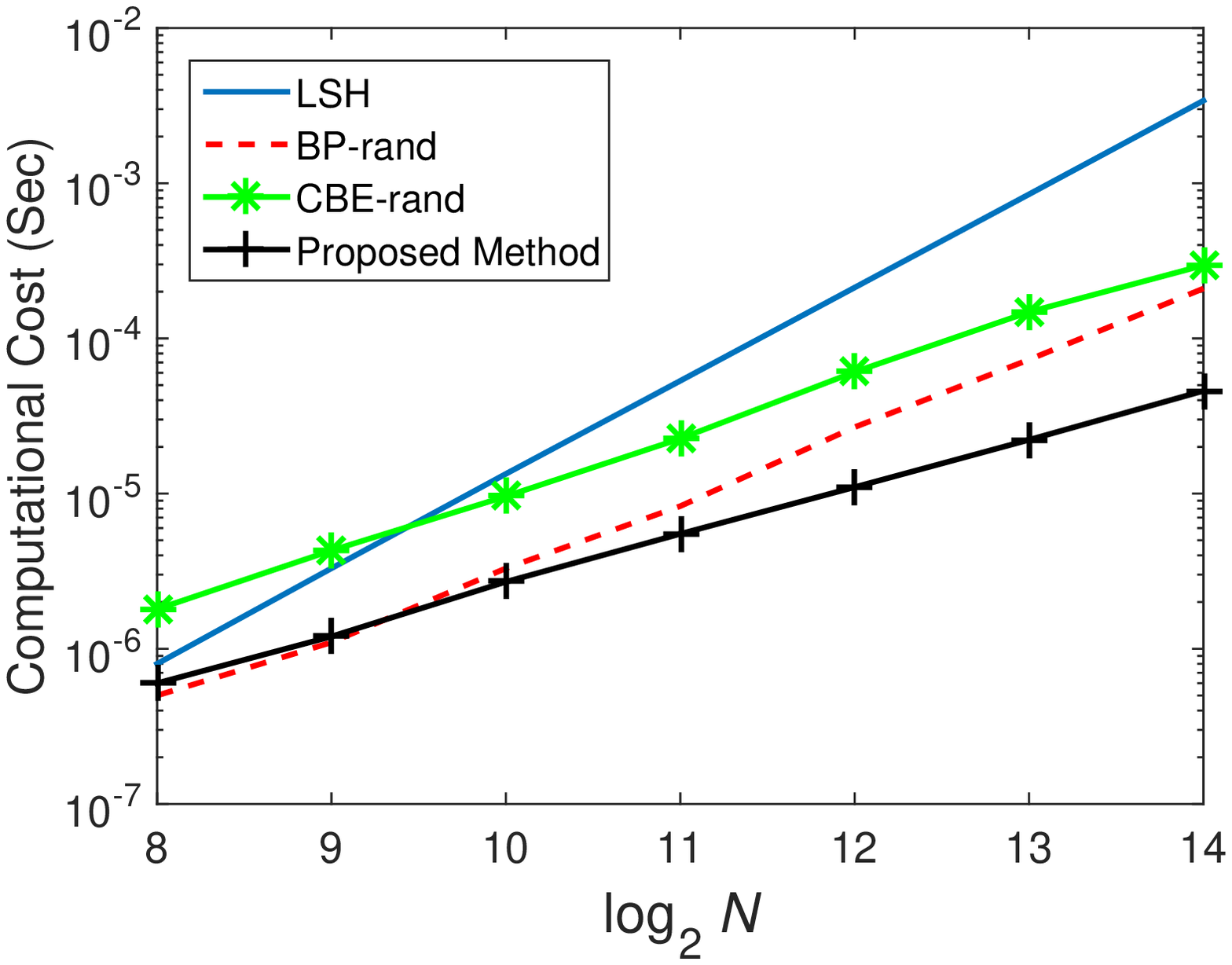,width=1.7in}}
  \centerline{(d)}
\end{minipage}
\caption{Comparisons between different approaches in terms of computation cost. (a) Fixing $N=2^{10}$, $M$ versus computation time.  (b) Fixing $N=2^{14}$, $M$ versus computation time. (c) Fixing $M=2^8$, $N$ versus computation time. (d) Fixing the compression ratio $\frac{N}{M}=2^6$, $N$ versus computation time.  }
\label{fig:Computation cost}
\end{figure}

\subsection{Computation and Memory Costs}\label{sec:exper 1}
%We verify the computation cost versus different parameters and storage cost.
Since computation cost are invariant to signal types, synthesis data were used here.
%Then, input data $\bm{x}$ is fed into algorithms to estimate the computation cost (sec) of embedding process.
Storage cost is equal to the memory requirement for saving projection $\bm{A}$.
%ITQ is ignored since it is data-dependent and must be equal to or slower than LSH (with/without considering learning).
Figs. \ref{fig:Computation cost}(a) and (b) show the computation time versus different $M$'s under $N=2^{10}$ and $N=2^{14}$, respectively.
Fig. \ref{fig:Computation cost}(c) shows the results obtained from different $N$'s under $M=2^{8}$.
One can clearly find that the proposed method outperforms the other methods (note the logarithmic scale of the vertical axis).
We can validate the experimental results along with theoretical results in Table \ref{table: computation cost and storage cost}.
When $M=N$, the computation cost of our method is equal to that of CBE-rand.
When $M<N$, we have two observations from Figs. \ref{fig:Computation cost}(a)$\sim$(c):
(i) our method is dominated by $O(N)$ when $M \log M < N$  and
(ii) $O(M \log M)$ dominates the computation cost when $M \log M \geq N$.
%In sum, no matter whichever slope is outperforms other methods.

In addition, fixing the compression ratio $\frac{N}{M}$, we have $M \log M > N$ for sufficiently large $N$.
It implies that for high-dimensional signals with a fixed compression ratio, the proposed method speeds up projection remarkably.
Fig. \ref{fig:Computation cost}(d) further shows the computation cost of our method increases slower than other methods with constant compression ratio.
It should be noted that BP-rand outperforms CBE-rand and our method when $N\leq 2^9$ because
(i) $N^{1.5}$ approximates $N \log N$ when $N$ is small and
(ii) CBE-rand and our method incur larger Big-O constants due to the use of FFT.

On the other hand, Table \ref{table: storage cost in MB} shows the comparison of memory cost for saving projection.
We follow the parameter setting in \cite{Gong2013} with $M=N$.
It is observed that our method is nearly comparable to CBE-rand.

However, our method actually requires less memory and outperforms CBE-rand under practical scenario with $M < N$, as depicted in Table \ref{table: storage cost in MB with different M}.
This is because the cost of $\bm{\Phi}$ in our method only depends on $M$ but that in CBE-rand depends on $N$.

\begin{table}[!htbp]
\caption{Memory (MegaBytes) needed to store the projection matrix, assuming each element is float-point ($32$ bits).
%The parameter setting follows \cite{Gong2013}
Note the results of BP-rand is directly copied from Table 2 of \cite{Gong2013}.}
\begin{tabular}{|c|c|c|c|c|}
\hline
$N$ & LSH  & BP-rand  & CBE-rand  &  Ours  \\
\hline\hline
$1.28\times 10^3$ & $6.25$ & $0.06$ & $0.0049$ & $0.005$   \\
$1.28\times 10^4$ & $625$ & $0.10$  & $0.049$ & $0.0504$    \\
$2.56\times 10^4$ & $2500$ & $0.22$ & $0.0977$ & $0.1007$    \\
$6.4 \times 10^4$ & $15625$ & $1.02$ & $0.2441$ & $0.2518$ \\
$1.28\times 10^5$ & $62500$ & $3.88$  & $0.4883$ & $0.5035$  \\
\hline
\end{tabular}
\label{table: storage cost in MB}
\end{table}

\vspace{-10pt}
\begin{table}[!htbp]
\caption{Memory (MegaBytes) required for our method and CBE-rand under fixed $N=1.28\times 10^5$ and various $M$.}
\small
\begin{tabular}{|c||c|c|c|c|}
\hline
$M$ &   $1.6\times 10^4$ & $3.4\times 10^4$ &  $6.4\times 10^4$ & $1.28\times 10^5$   \\
\hline
 CBE-rand  & 0.4883 & 0.4883  & 0.4883 & 0.4883   \\
Ours & 0.0763 & 0.1373  & 0.2594 & 0.5035    \\
\hline
\end{tabular}
\normalsize
\label{table: storage cost in MB with different M}
\end{table}

\subsection{Image Applications}\label{sec:exper 2}
We verify whether binary codes yielded after our embedding scheme, despite its low computation and storage cost, still contain discriminative power in image classification and retrieval.

\subsubsection{Image Classification}
Two datasets were considered in image classification:
\begin{enumerate}
\item [$\bullet$]  CIFAR \cite{Krizhevsky2009}: It consists of $64,800$ images that have been manually grouped into $11$ ground-truth classes (airplane, automobile, bird, boat, cat, deer, dog, frog, horse, ship and truck). All images were represented as GIST descriptor \cite{Oliva2001} with $N=2048$.
\item [$\bullet$]  MNIST \cite{LeCun1998}: It includes $60,000$ images with handwriting digits from $0-9$.
All images were represented as GIST descriptor \cite{Oliva2001} with $N=512$.
\end{enumerate}

After embedding, binary codes were fed into LIBSVM \cite{Chang-SVM2011} to train classifier by supervised learning (8-fold cross-validation).
Ground truth is based on pre-defined labels provided by the datasets.
Fig. \ref{fig:Accuracy by SVM} shows the accuracy versus different $M$ bits, where accuracy is the probability that classifier has labeled an testing image into the ground truth.
In both CIFAR and MNIST datasets, the proposed method is comparable to LSH and CBE-rand, but the performance of BP-rand degrades due to projection within a bilinear structure.
In addition, though GIST feature is not sparse, our method still exhibits good performance because the features are still approximately sparse, where only few entries are significant.

\begin{figure}[h]
\begin{minipage}[b]{.5\linewidth}
  \centering{\epsfig{figure=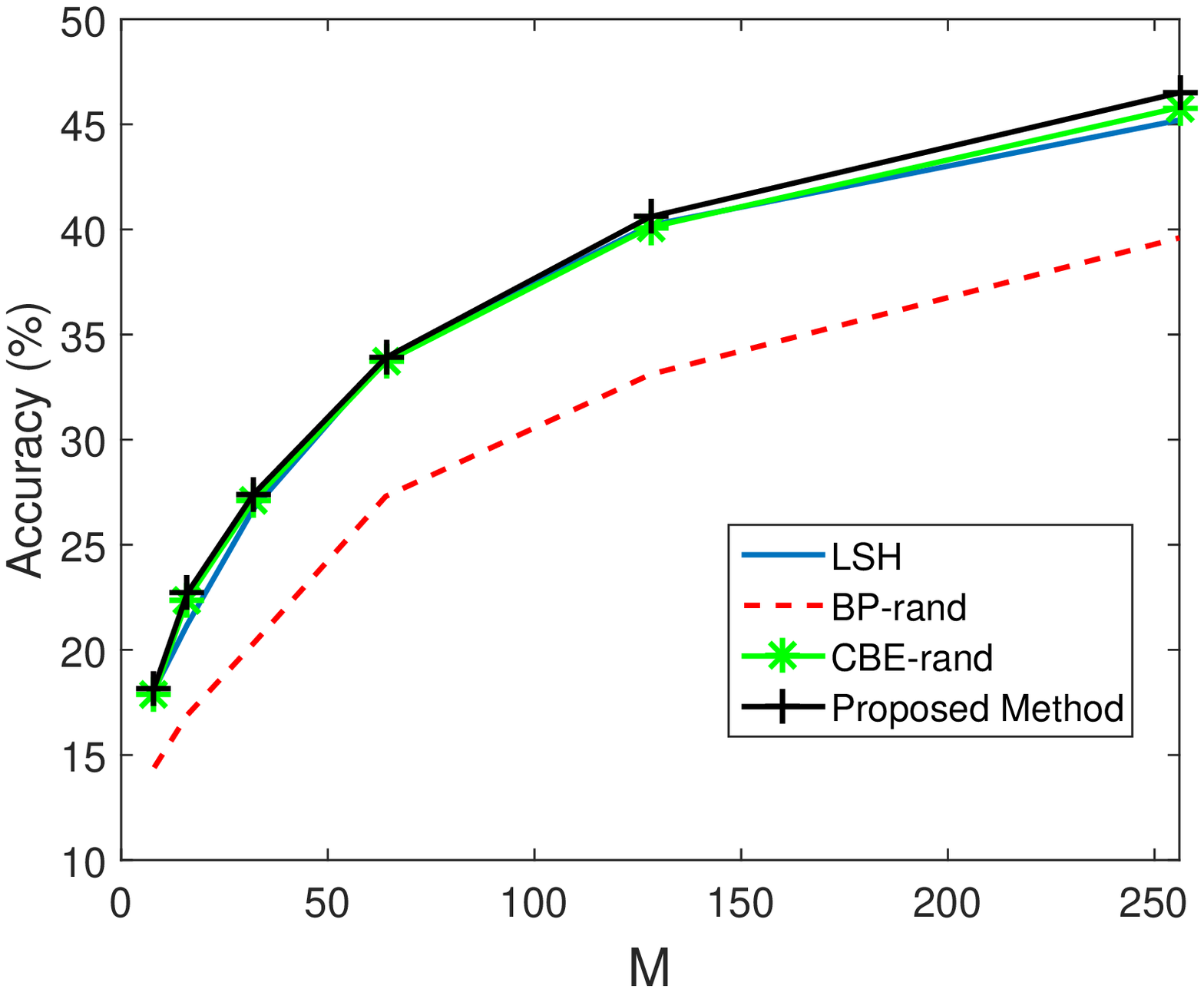,width=1.9in}}
  \centerline{(a)}
\end{minipage}
\begin{minipage}[b]{.48\linewidth}
  \centering{\epsfig{figure=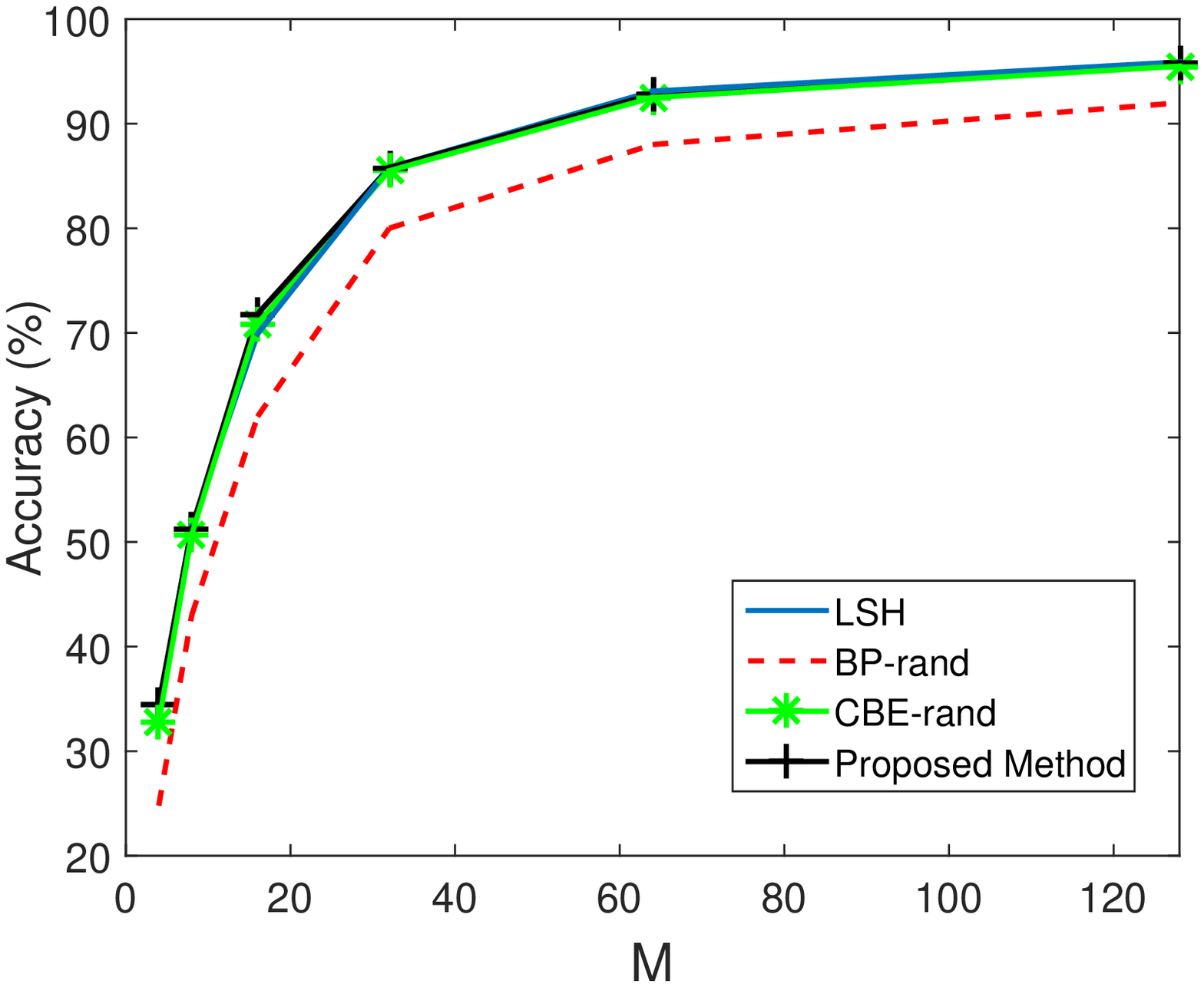,width=1.9in}}
  \centerline{(b)}
\end{minipage}
\caption{Accuracy vs. $M$. Classifier is learned by LIBSVM. (a) CIFAR dataset  (b) MNIST dataset.  }
\label{fig:Accuracy by SVM}
\end{figure}

\subsubsection{Image Retrieval }\label{sec:exper 2}
For purpose of image retrieval, we used the same datasets and setting in image classification.
All images still were represented by GIST features.
%Ground truth is based on pre-defined labels provided by the datasets.
In this experiment, ``retrieval'' was performed by randomly selecting $1,000$ query images from dataset and returning images according to hamming distance sorting in an ascending order.
Performance is measured by mean Average Precision (mAP) \cite{Gong2013-ITQ}.

Fig. \ref{fig:Retrival versus M with fixed top} shows mAP with top $50$ returned images.
Whatever $M$ is, the proposed approach has the comparable performance with LSH and CBE-rand.
%It means that hamming distance are used as a similarity measure in proposed binary codes to efficiently search similar images.
In other words, the proposed method preserves angle (similarity) well even an extra downsampling matrix is introduced to achieve faster binary embedding.

%Fig. \ref{fig:Retrival versus fixed M with different top} shows that mAP decreases along with the increase of the number of returned images. Each class in both CIFAR and MNIST datasets includes about 6000 images. Ideally, the performance should be invariant until returned images exceeds 6000. But, it is still a challenge for existing methods.

\begin{figure}[h]
\begin{minipage}[b]{.5\linewidth}
  \centering{\epsfig{figure=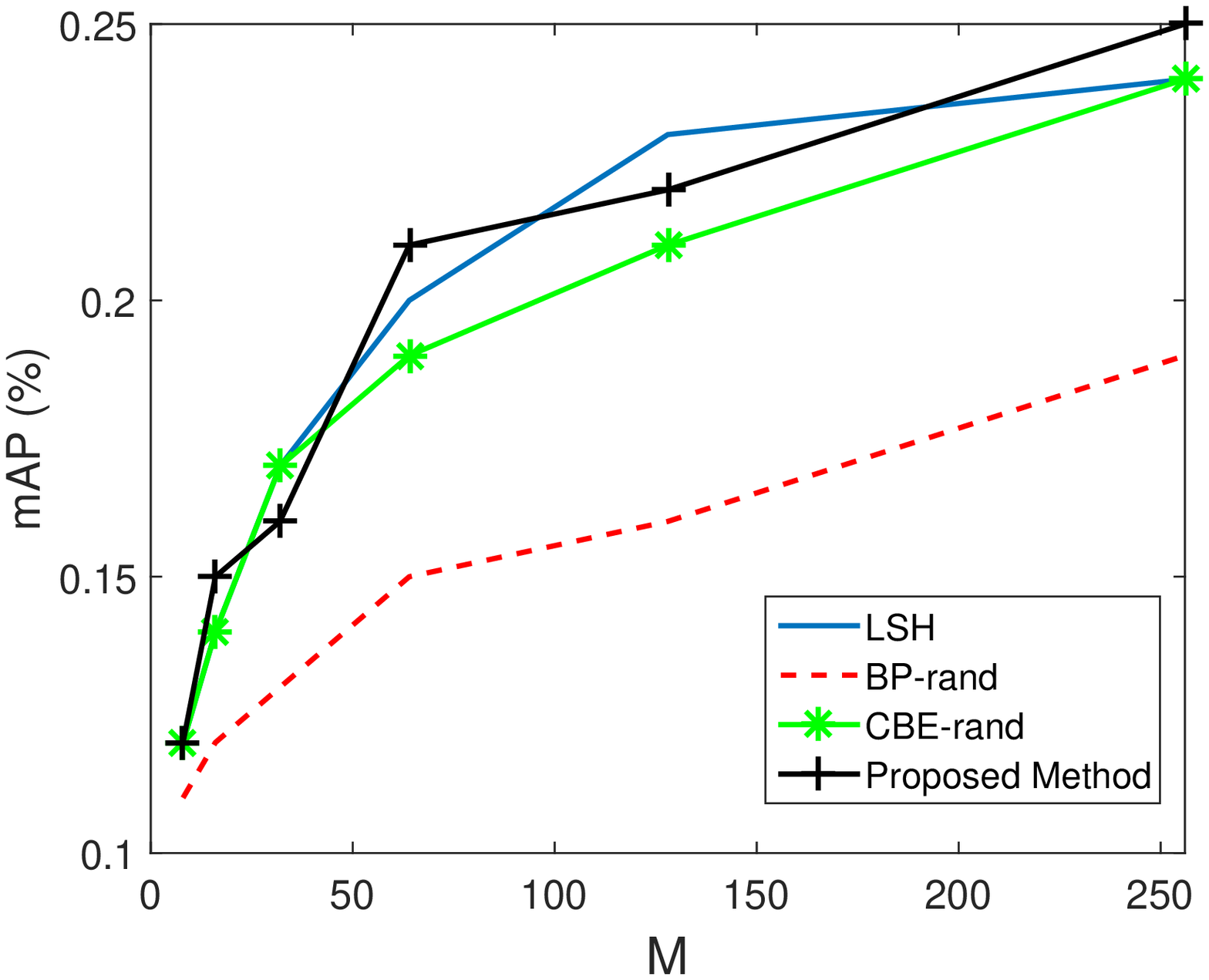,width=1.9in}}
  \centerline{(a)}
\end{minipage}
\begin{minipage}[b]{.48\linewidth}
  \centering{\epsfig{figure=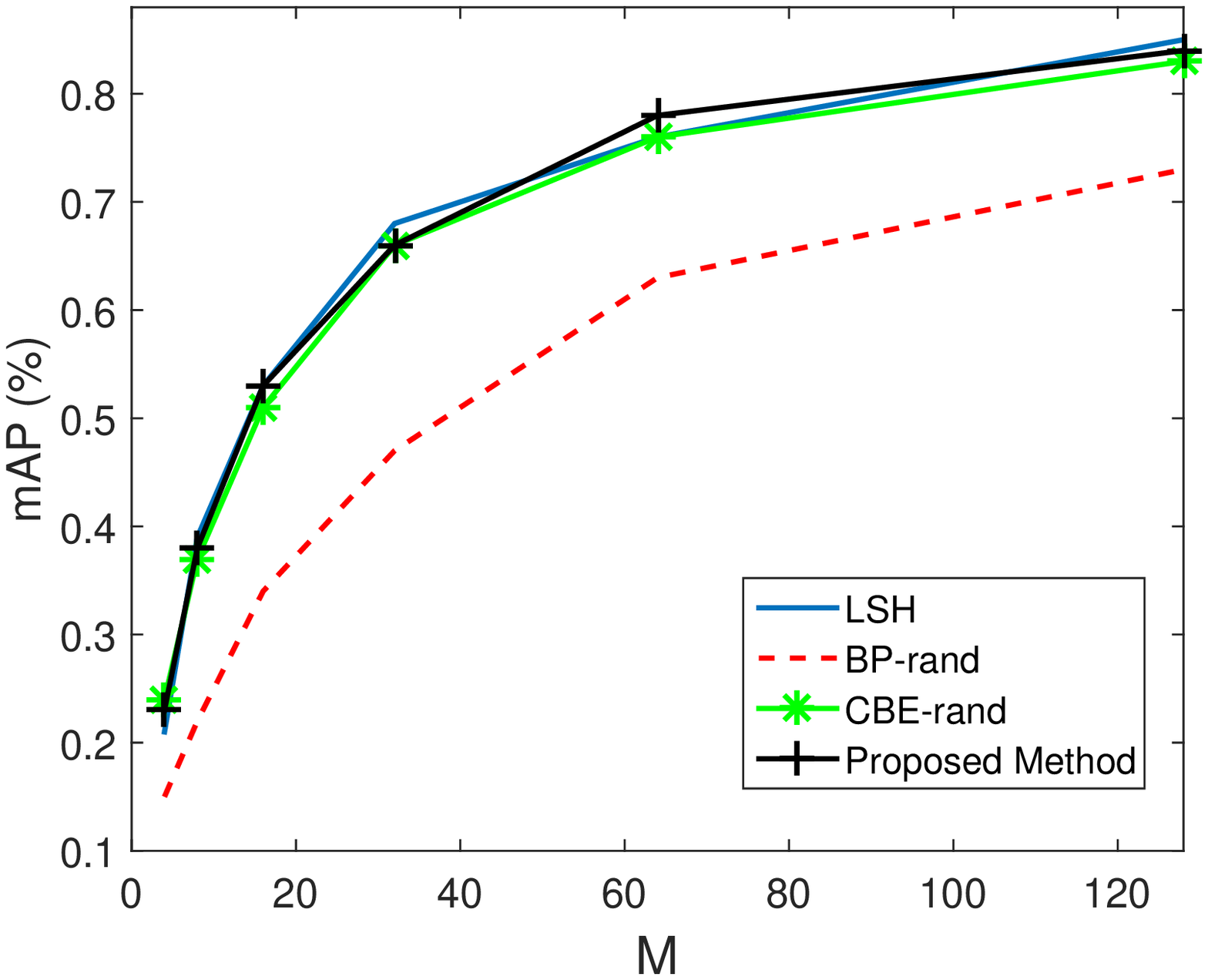,width=1.9in}}
  \centerline{(b)}
\end{minipage}
\caption{mAP vs. $M$ with top $50$ returned images. (a) CIFAR dataset.  (b) MNIST dataset.  }
\label{fig:Retrival versus M with fixed top}
\end{figure}

%\begin{figure}[h]
%\begin{minipage}[b]{.5\linewidth}
%  \centering{\epsfig{figure=Retrieval_CIFAR_M.eps,width=1.9in}}
%  \centerline{(a)}
%\end{minipage}
%\begin{minipage}[b]{.48\linewidth}
%  \centering{\epsfig{figure=Retrieval_MNIST_M.eps,width=1.9in}}
%  \centerline{(b)}
%\end{minipage}
%\caption{mAP versus the number of returned images. (a) CIFAR dataset with $M=64$.  (b) MNIST dataset with $M=32$.  }
%\label{fig:Retrival versus fixed M with different top}
%\end{figure}

\section{Conclusions and Future Works}\label{sec:cons}
In this paper, we have proposed a data-independent binary embedding technique with $O(N+M\log M)$ in computation cost and $O(N)$ in storage cost to outperform state-of-the-art approaches.
We also theoretically prove that if data have sparsity, similarity (angle) between data is preserved well.
The full potential of our method is applied for ultra-high dimensional data \cite{ChangeFu2014}, for which no other methods are applicable.

For future work, the goal is to extend our method to data-dependent paradigm.
That is, given $\bm{R}$, $\bm{\Phi}\bm{R}\bm{x}$ is considered to be new training data instead of $\bm{x}$. All we need to do is to learn a circulant matrix $\bm{D}$.
Thus, the learning process applies to low-dimensional data ($\bm{\Phi}\bm{R}\bm{x}$), resulting in low computation and memory costs.
After that, our goal is to simultaneously learn $\bm{D}$ and $\bm{R}$.

%\begin{figure}[h]
%\begin{minipage}[b]{1\linewidth}
%  \centering{\epsfig{figure=E_HammingDis.eps,width=2in}}
%\end{minipage}
%\caption{The illustration of angle-preserving prepuberty for the proposed matrix, where $N=1024$ and $M=64,128,256,512$ are labeled as ''Proposed-64``,...,''Proposed-512`` respectively. In addition, ''Rand`` represents the outcome by Gaussian random matrix.  }
%\label{fig:EValue of Hamm}
%\end{figure}
%
%
%\begin{figure}[h]
%\begin{minipage}[b]{.5\linewidth}
%  \centering{\epsfig{figure=Var_Hamming_12.eps,width=1.6in}}
%  \centerline{(a) $\displaystyle \theta=\pi/12$}
%\end{minipage}
%\begin{minipage}[b]{.48\linewidth}
%  \centering{\epsfig{figure=Var_Hamming_6.eps,width=1.6in}}
%  \centerline{(b) $\displaystyle \theta=\pi/6$}
%\end{minipage}
%\begin{minipage}[b]{.5\linewidth}
%  \centering{\epsfig{figure=Var_Hamming_3.eps,width=1.6in}}
%  \centerline{(c) $\displaystyle \theta=\pi/3$}
%\end{minipage}
%\begin{minipage}[b]{.48\linewidth}
%  \centering{\epsfig{figure=Var_Hamming_2.eps,width=1.6in}}
%  \centerline{(d) $\displaystyle \theta=\pi/2$}
%\end{minipage}
%\caption{$Var\left\{ \mathcal{H}_{M}\left(\bm{x}_{1},\bm{x}_{2}\right) \right\}$ versus different $M$'s under (a) $\theta=\frac{\pi}{12}$ (b) $\theta=\frac{\pi}{6}$ (c) $\theta=\frac{\pi}{3}$ (d) $\theta=\frac{\pi}{2}$. The proposed matrix obtains smaller variance compared with Gaussian random matrix.}
%\label{fig:Variance of Hamm}
%\end{figure}

\section{Acknowledgment}
This work was supported by Ministry of Science and Technology, Taiwan, ROC, under grants MOST 104-2221-E-001-019-MY3 and 104-2221-E-001-030-MY3.

\bibliographystyle{IEEEbib}	% (uses file "IEEEbib.bst")
\bibliography{refs}		% expects file "refs.bib"

\begin{thebibliography}{10}

\bibitem{HYZ2013}
L.-K. Huang, Q.~Yang, and W.-S. Zheng,
\newblock ``Online hashing,''
\newblock {\em in Proceedings of the international joint conference on
  Artificial Intelligence}, pp. 1422--1428, 2013.

\bibitem{JCBL2015}
C~Leng, J.~Wu, J.~Cheng, X.~Bai, and H.~Lu,
\newblock ``Online sketching hashing,''
\newblock {\em IEEE Conference on Computer Vision and Pattern Recognition}, pp.
  2503--2511, 2015.

\bibitem{Charikar2002}
M.~S. Charikar,
\newblock ``Similarity estimation techniques from rounding algorithms,''
\newblock {\em ACM Symposium on Theory of Computing}, pp. 380--388, 2002.

\bibitem{Raginsky2009}
M.~Raginsky and S.~Lazebnik,
\newblock ``Localitysensitive binary codes from shift-invariant kernels,''
\newblock {\em Neural Information Processing Systems}, 2009.

\bibitem{KL2009}
L.-W. Kang and C.-S. Lu,
\newblock ``Compressive sensing-based image hashing,''
\newblock {\em IEEE Conference on Image Processing}, pp. 1285--1288, 2009.

\bibitem{Gong2013}
Y.~Gong, K.~Sanjiv, H.~A. Rowley, and S.~Lazebnik,
\newblock ``Learning binary codes for highdimensional data using bilinear
  projections,''
\newblock {\em IEEE Conference on Computer Vision and Pattern Recognition}, pp.
  484--491, 2013.

\bibitem{ChangeFu2014}
F.~Yu, S.~Kumar, Y.~Gong, and S.-F. Chang,
\newblock ``Circulant binary embedding,''
\newblock in {\em International Conference on Machine Learning}, 2014.

\bibitem{Gong2013-ITQ}
Y.~Gong, S.~Lazebnik, A.~Gordo, and F.~Perronnin,
\newblock ``Iterative quantization: A procrustean approach to learning binary
  codes for large-scale image retrieval,''
\newblock {\em IEEE Transations on Pattern Analysis and Machine Intelligence},
  vol. 35, pp. 2916--2929, 2013.

\bibitem{Xia2015}
Y.~Xia, K.~He, P.~Kohli, and J.~Sun,
\newblock ``Compressive signal processing with circulant sensing matrices,''
\newblock {\em IEEE international conference on Acoustic, Speech and Signal
  Processing}, pp. 1015--1019, 2015.

\bibitem{Lai2015}
Y.~Pan H.~Lai, and, Y.~Liu, and S.~Yan,
\newblock ``Simultaneous feature learning and hash coding with deep neural
  networks,''
\newblock {\em IEEE Conference on Computer Vision and Pattern Recognition}, pp.
  3270--3278, 2015.

\bibitem{BB2008}
P.~Boufounos and R.~G. Baraniuk,
\newblock ``1-bit compressive sensing,''
\newblock {\em Conf. on Info. Sciences and Systems}, pp. 16--21, 2008.

\bibitem{Hassanieh2012acm}
H.~Hassanieh, P.~Indyk, D~Katabi, and Eric Price,
\newblock ``Faster gps via the sparse fourier transform,''
\newblock in {\em ACM MOBICOM}, 2012.

\bibitem{Kim2015}
S.~Kim and S.~Choi,
\newblock ``Bilinear random projections for locality-sensitive binary codes,''
\newblock {\em IEEE Conference on Computer Vision and Pattern Recognition}, pp.
  1338 -- 1346, 2015.

\bibitem{Chang2013}
L.-H. Chang and J.-Y. Wu,
\newblock ``Achievable angles between two compressed sparse vectors under
  norm/distance constraints imposed by the restricted isometry property: A
  plane geometry approach,''
\newblock {\em IEEE Transations on Information Theory}, vol. 59, pp.
  2059--2081, 2013.

\bibitem{Baraniuk07}
R.~Baraniuk, M.~Davenport, R.~Devore, and M.~Wakin,
\newblock ``A simple proof of the restricted isometry property for random
  matrices,''
\newblock {\em Constructive Approximation}, vol. 28, pp. 253--263, 2008.

\bibitem{Foucart2013}
S.~Foucart and H.~Rauhut,
\newblock ``A mathematical introduction to compressive sensing,''
\newblock in {\em Applied and Numerical Harmonic Analysis}, 2013.

\bibitem{Krizhevsky2009}
A.~Krizhevsky,
\newblock ``Learning multiple layers of features from tiny images,''
\newblock {\em Tech Report. University of Toronto}, 2009.

\bibitem{Oliva2001}
A.~Torralba A.~Oliva, and,
\newblock ``Modeling the shape of the scene: A holistic representation of the
  spatial envelope,''
\newblock {\em International Journal of Computer Vision}, vol. 42, pp.
  145--175, 2001.

\bibitem{LeCun1998}
Y.~LeCun and C.~Cortes,
\newblock ``The mnist database of handwritten digits,''
\newblock 1998.

\bibitem{Chang-SVM2011}
C.~C. Chang and C.-J. Lin,
\newblock ``Libsvm: A library for support vector machines,''
\newblock {\em ACM Transactions on Intelligent Systems and Technology}, vol. 2,
  pp. 1--27, 2011.

\end{thebibliography}

\end{document}